\keywords{MANDATORY list of keywords}
\renewcommand{\epsilon}{\varepsilon}
\renewcommand{\phi}{\varphi}
\newcommand{\dfib}{\mathit{DFib}}
\theoremstyle{plain} 
\begin{document}

\title{Simple game semantics and Day convolution}

\author[C.~Eberhart]{Clovis Eberhart}	
\address{National Institute of Informatics, Japan}	

\author[T.~Hirschowitz]{Tom Hirschowitz}	
\address{Univ. Grenoble Alpes, Univ. Savoie Mont Blanc, CNRS, LAMA}	

\author[A.~Laouar]{Alexis Laouar}	
\address{Univ. Grenoble Alpes, Univ. Savoie Mont Blanc, CNRS, LAMA}	





\begin{abstract}
  \noindent Game semantics has provided adequate models for a variety
  of programming languages~\cite{hyland97}, in which types are
  interpreted as two-player games and programs as strategies.
  Melliès~\cite{PAM} suggested that such categories of games and
  strategies may be obtained as instances of a simple abstract
  construction on weak double categories.  However, in the particular
  case of simple games~\cite{DBLPconf/lics/HarmerHM07}, his
  construction slightly differs from the standard category.  We refine
  the abstract construction using factorisation systems, and show that
  the new construction yields the standard category of simple games
  and strategies. Another perhaps surprising instance is Day's
  \emph{convolution} monoidal structure on the category of presheaves
  over a strict monoidal category.
\end{abstract}

\maketitle

\section{Introduction}

The construction of most game models follows a common pattern.
Typically, a function $A \to  B$ is interpreted as a strategy on a
compound game made of $A$ and $B$, where the program plays as
\emph{Proponent} (P) on $B$ and as \emph{Opponent} (O) on $A$.  A
crucial feature of game models is composition of strategies, by which
two strategies $\sigma \colon  A\to B$ and $\tau \colon B\to C$ yield a strategy $\tau \circ \sigma \colon A \to 
C$. Intuitively, $\tau \circ \sigma $ lets $\sigma $ and $\tau $ interact on $B$ until one of
them produces a move in $A$ or $C$. However, in order to obtain a
strategy $A \to  C$, everything occurring on $B$ should be hidden.

Although widely acknowledged, this strong commonality is also
recognised as poorly understood, particularly in the presence of
\emph{innocence}, a constraint on strategies that restricts them to
purely functional behaviour.  This has prompted a number of attempts
at clarifying the
situation~\cite{hirschowitz2007theory,DBLPconf/lics/HarmerHM07,DBLPconf/lics/EberhartH18}.
In particular, Melliès~\cite{PAM} recently proposed a novel
explanation, of unprecedented simplicity. Indeed, it rests upon a
purely categorical construction, essentially taking the slice of a
weak double category over an internal monad. This suggests that the
approach may encompass a wide variety of game models, which is
currently not the case as it is restricted to linear languages.  More
generally, it may prompt new connections between game semantics and
other settings.

In this paper, we focus on a peculiar feature of Melliès's approach,
namely the slight discrepancy between his category of games and
strategies and the standard one. Indeed, while standard strategies
only play one move at a time, Melliès's may play several moves
simultaneously. This raises the question of whether the standard
setting may be recovered by refining his approach.  We answer this
question positively by enriching the setting with a factorisation
system~\cite{Bousfield}: using a double categorical variant of the
so-called comprehensive factorisation system~\cite{comprehensive}, we
obtain the standard setting as an instance.

As a bonus, one of the connections alluded to above is established.
Namely, we show that the Day \emph{convolution} product~\cite{Day}
arises as an instance of our refined framework, though only in the
restricted case of \emph{strict} monoidal categories.  The convolution
product, which arose in algebraic topology, extends the monoidal
structure of a given category $\CCC $ to the category $\psh{\CCC }$ of
\emph{presheaves} on $\CCC $, i.e., contravariant functors $\op{\CCC } \to  \Set $.
This makes formal the similarity, noted in~\cite[Section~6.5]{Clovis},
between convolution and composition of strategies, by showing that
both are instances of the same construction.

\subsection*{Related work}
Beyond Melliès's obviously related work
and~\cite{hirschowitz2007theory}, \citet{GarnerShulman} prove results
related to Theorems~\ref{thm:basic} and~\ref{thm:facto}.  The common
ground for comparison is the restriction of our
Theorem~\ref{thm:basic} to weak double categories with trivial
vertical category, i.e., monoidal categories. Their Theorem~14.2 is a
generalisation in another direction, namely that of monoidal
bicategories, and their Theorem~14.5 could in particular accomodate
various sorts of bicategorical factorisation systems.  What is needed
for dealing with Day convolution in full generality (as mentioned in
Section \ref{sec:conc}) is a common generalisation of their
Theorem~14.5 and our Theorem~\ref{thm:facto}.

\subsection*{Plan}
In Section~\ref{sec:mellies}, after briefly reviewing \emph{double
  categories}~\cite{Ehresmanndouble}, a 2-dimensional generalisation
of categories, we recall the cornerstone of Melliès's variant of
simple games, the double category $\moo$. We then explain how
Melliès's bicategory of simple games may be obtained by a double
categorical generalisation of the slice construction.  In
Section~\ref{sec:simple}, we then refine this construction. We
introduce \emph{double factorisation systems} and show that
restricting a slice weak double category to members of the right class
of a double factorisation system again yields a weak double category.
Finally, we observe that this construction has both standard simple
games and Day convolution as instances.  We conclude and give some
persective on future work in Section~\ref{sec:conc}.

\section{Melliès's simple games}\label{sec:mellies}
The cornerstone of Melliès's account of simple games is a double
category $\moo$ (called the \emph{clock}) which embodies the essence
of scheduling.  So let us briefly recall what a double category is,
and then describe $\moo$.

\subsection{Recap on double categories}
A double category $\CCC$ essentially consists of two categories $\CCC_h $ and $\CCC_v $
sharing the same object set, together with a set of \emph{cells}
  \begin{center}
    \diag{%
      A \& B \\
      C \& D\rlap{,} %
    }{%
      (m-1-1) edge[pro,twoa={f}] (m-1-2) %
      edge[labell={u}] (m-2-1) %
      (m-2-1) edge[pro,twob={g}] (m-2-2) %
      (m-1-2) edge[labelr={v}] (m-2-2) %
      (a) edge[cell=0,labelr={\alpha }] (b) %
    }
  \end{center}
  where $A,B,C$, and $D$ are objects, $f$ and $g$ are morphisms in
  $\CCC_h $, and $u$ and $v$ are morphisms in $\CCC_v $.  In order to
  distinghish notationally between horizontal and vertical morphisms,
  we mark horizontal ones with a bullet. Cells are furthermore
  equipped with composition and identities in both directions.  E.g.,
  to any given cells with compatible vertical border as below left is
  assigned a composite cell as below right
  \begin{equation}
    \diag{%
      A \& B \& E \\
      C \& D \& F %
    }{%
      (m-1-1) edge[pro,twoa={S}] (m-1-2) %
      edge[labell={p}] (m-2-1) %
      (m-2-1) edge[pro,twob={T}] (m-2-2) %
      (m-1-2) edge[labelr={q}] (m-2-2) %
      (a) edge[cell=0,labelr={\alpha }] (b) %
      (m-1-2) edge[pro,twoabove={a'}{S'}] (m-1-3) %
      (m-2-2) edge[pro,twobelow={b'}{T'}] (m-2-3) %
      (m-1-3) edge[labelr={r}] (m-2-3) %
      (a') edge[cell=0,labelr={\beta }] (b') %
    }
    \hfil \mapsto  \hfil
    \diag{%
      A \& E \\
      C \& F, %
    }{%
      (m-1-1) edge[pro,twoa={S' \bullet S}] (m-1-2) %
      edge[labell={p}] (m-2-1) %
      (m-2-1) edge[pro,twob={T' \bullet T}] (m-2-2) %
      (m-1-2) edge[labelr={r}] (m-2-2) %
      (a) edge[cell=0,labelr={\beta  \bullet \alpha }] (b) %
    }
    \label{eq:compoH}
  \end{equation}
 Similarly, we have horizontal identities
  \begin{center}
    \diag{%
      A \& A \\
      B \& B, %
    }{%
      (m-1-1) edge[pro,twoa={id^\bullet_A}] (m-1-2) %
      edge[labell={p}] (m-2-1) %
      (m-2-1) edge[pro,twob={id^\bullet_B}] (m-2-2) %
      (m-1-2) edge[labelr={p}] (m-2-2) %
      (a) edge[cell=0,labelr={id^\bullet_p}] (b) %
    }
  \end{center}
  Both notions of composition are required to be associative and the
  corresponding identities unital.  Finally, the \emph{interchange
    law} requires the two different ways of parsing any compatible pasting
  \begin{center}
    \diag{
      A \& B \& C \\
      D \& E \& F \\
      G \& H \& I 
    }{%
      (m-1-1) edge (m-2-1)
      (m-2-1) edge (m-3-1)
      (m-1-2) edge (m-2-2)
      (m-2-2) edge (m-3-2)
      (m-1-3) edge (m-2-3)
      (m-2-3) edge (m-3-3)
      (m-1-1) edge[pro] node[coordinate] (ul) {} (m-1-2)
      (m-1-2) edge[pro] node[coordinate] (ur) {} (m-1-3)
      (m-2-1) edge[pro] node[coordinate] (ml) {} (m-2-2)
      (m-2-2) edge[pro] node[coordinate] (mr) {} (m-2-3)
      (m-3-1) edge[pro] node[coordinate] (dl) {} (m-3-2)
      (m-3-2) edge[pro] node[coordinate] (dr) {} (m-3-3)
      (ul) edge[cell=0.2, labelr={\alpha }] (ml)
      (ml) edge[cell=0.2, labelr={\beta }] (dl)
      (ur) edge[cell=0.2, labelr={\gamma }] (mr)
      (mr) edge[cell=0.2, labelr={\delta }] (dr)
    }%
  \end{center}
  to agree, i.e.,
    \begin{center}
      $(\delta  \circ  \gamma ) \bullet (\beta  \circ  \alpha ) = (\delta  \bullet \beta ) \circ  (\gamma  \bullet \alpha )$.
    \end{center}

    There is an alternative point of view on double categories which
    will be crucial to us. The previous presentation has emphasised
    $\CCC_v $ and $\CCC_h $, and added the set of cells. But we could also put
    forward $\CCC_V$, the category whose objects are horizontal
    morphisms, and whose morphisms are cells.
    Indeed, double categories may be axiomatised based on a pair of functors
    \begin{equation}
    l,r \colon  \CCC_V \rightrightarrows \CCC_v .\label{eq:lr}
  \end{equation}

    What is missing from this is just horizontal composition and
    identities (for horizontal arrows and cells), which may be
    postulated by requiring that $l$ and $r$ form a \emph{category
      object} in $\Cat$.

    Equivalently, still, we can consider the following structure,
    which is almost a (large) double category.
    \begin{defi}\label{def:span:cat}
      Let $\Span(\Cat)$ have
      \begin{itemize}
      \item as objects all small categories,
      \item as vertical morphisms all functors,
      \item as horizontal morphisms $A \proto B$ all spans
        $A \ot  C \to  B$ of functors, and
      \item as cells 
        \begin{center}
          \diag{A \& B \\ U \& V}{
            (m-1-1) edge[pro,twoa={}] (m-1-2) %
            edge[labell={}] (m-2-1) %
            (m-2-1) edge[pro,twob={}] (m-2-2) %
            (m-1-2) edge[labelr={}] (m-2-2) %
            (a) edge[cell=0] (b) %
            }
          all commuting diagrams
          \diag{%
            A \& C \& B \\
            U \& W \& V
          }{%
            (m-1-1) edge[<-,labela={}] (m-1-2) %
            edge[labell={}] (m-2-1) %
            (m-2-1) edge[<-,labelb={}] (m-2-2) %
            (m-1-2) edge[labelr={}] (m-2-2) %
            (m-1-2) edge[labela={}] (m-1-3) %
            (m-2-2) edge[labelb={}] (m-2-3) %
            (m-1-3) edge[labelr={}] (m-2-3) %
          }
        \end{center}
        in $\Cat$.
      \end{itemize}
      Vertical composition is given by (componentwise) composition of
      functors, while horizontal composition is given by pullbacks and
      their universal property.
  \end{defi}
  The structure formed by $\Span(\Cat)$ is a \emph{weak double
    category}~\cite{GarnerPhD}, a weak form of double category where
  horizontal composition is only associative and unital up to coherent
  isomorphism, in a suitable sense. In particular, the horizontal
  arrows and \emph{special} cells of a weak double category form a
  bicategory, where special means that the left and right borders are
  identities. This entails that we may define monads in weak double
  categories just like one usually does in bicategories, and we have:
  \begin{prop}
    A double category is the same as a monad in $\Span(\Cat)$.
    \label{prop:dcat_monad}
  \end{prop}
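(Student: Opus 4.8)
The plan is to recognise this statement as an instance of the folklore correspondence between monads in a bicategory of spans and internal categories: for any category $\mathcal{E}$ with pullbacks, a monad in $\Span(\mathcal{E})$ is precisely a category object in $\mathcal{E}$. Specialising to $\mathcal{E} = \Cat$ and recalling, as in the discussion around~\eqref{eq:lr}, that a category object in $\Cat$ is exactly a double category yields the result. Concretely, I would prove the two directions by unfolding the data on each side and checking that they match.

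First I would unfold what a monad in $\Span(\Cat)$ amounts to. Its carrier is a small category, which I rename $\CCC_v$; its underlying horizontal endomorphism is a span $\CCC_v \ot \CCC_V \to \CCC_v$, whose two legs I take to be the functors $l, r \colon \CCC_V \rightrightarrows \CCC_v$ of~\eqref{eq:lr}. Thus $\CCC_v$ becomes the category of objects and vertical morphisms, $\CCC_V$ the category of horizontal morphisms and cells, and $l, r$ assign to each cell its left and right borders. It then remains to interpret the monad's unit and multiplication, both of which are \emph{special} cells, i.e., functors between span apexes commuting with both legs.

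Next I would read off horizontal composition and identities. The multiplication $\mu$ is a special cell whose domain is the horizontal composite of the endospan with itself, i.e., the pullback $\CCC_V \times_{\CCC_v} \CCC_V$ of composable cells; being a functor $\CCC_V \times_{\CCC_v} \CCC_V \to \CCC_V$ commuting with the outer legs, it is exactly horizontal composition of horizontal morphisms and of cells, with the leg condition recording that composition preserves left and right borders. Similarly, the unit $\eta$ is a functor $\CCC_v \to \CCC_V$ over $\CCC_v$, i.e., the assignment of horizontal identities $id^\bullet_A$ and identity cells $id^\bullet_p$. Conversely, from a double category presented via $l, r$ together with its horizontal composition and identities one reads off precisely this span, $\mu$, and $\eta$; the two passages are evidently mutually inverse on data.

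The main work—and the only genuine subtlety—is matching the axioms. Here I would be careful that $\Span(\Cat)$ is only a \emph{weak} double category: horizontal composition of spans is defined by a choice of pullbacks, so the bicategory of special cells carries nontrivial associators and unitors, namely the canonical comparison isomorphisms between iterated pullbacks. The monad associativity and unit laws are therefore, a priori, equations mediated by these coherence isomorphisms. The key point to verify is that, because $\mu$ and $\eta$ are genuine functors and the coherence cells are exactly the universal pullback isomorphisms, these laws unpack to \emph{strict} equalities of functors—precisely associativity and unitality of horizontal composition and of its identities. I expect this bookkeeping, confirming that the weakness of $\Span(\Cat)$ is entirely absorbed by the universal properties of the pullbacks involved, to be the crux of the argument; everything else is a direct translation of data.
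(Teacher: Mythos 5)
Your proposal is correct and takes essentially the same route as the paper: the paper's one-sentence argument is precisely your identification of the monad's underlying span with $l,r \colon \CCC_V \rightrightarrows \CCC_v$ and of the multiplication with horizontal composition, i.e., the folklore equivalence between monads in $\Span(\Cat)$ and category objects in $\Cat$. Your extra bookkeeping about the pullback coherence isomorphisms absorbing the weakness of $\Span(\Cat)$ is a genuine point that the paper glosses over entirely, but it fleshes out the same argument rather than replacing it.
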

  Indeed, composing a span~\eqref{eq:lr} with itself in $\Span(\Cat)$
  amounts to constructing the category of pairs of compatible
  horizontal morphisms and cells~\eqref{eq:compoH}, so requiring a
  monad multiplication is requiring horizontal composition.

\subsection{The clock}
The vertical category of Melliès's $\moo$ is the free category on the graph
\begin{center}
  \diag{
    O \& P.
  }{%
    (m-1-1) edge[bend left] (m-1-2) %
    (m-1-2) edge[bend left] (m-1-1) %
  }
\end{center}
So its objects are just $O$ and $P$, which stand for \emph{Opponent}
and \emph{Proponent}, as in game models, and morphisms just count the
number of (alternating) moves between them.  One way to denote such
morphisms is just as alternating strings of $O$s and $P$s.

The horizontal category $\moo _h $ is simply the ordinal $\two$, viewed as
a category, except that $0$ is here renamed to $O$ and $1$ to $P$, so
we have $O \leq  P$, and there are only three horizontal morphisms: $OO$,
$PP$, and $OP$.

Finally, cells describe the allowed schedulings in an arrow game: they
are simply arrows in the free category $\moo_V$ over the famous state
diagram for simple games
\begin{center}
  \diag{%
    OO \& OP \& PP.
  }{%
    (m-1-1) edge[bend left] (m-1-2)
    edge[bend right,<-] (m-1-2)
    (m-1-2) edge[bend left] (m-1-3)
    edge[bend right,<-] (m-1-3)
  }
\end{center}
Again, a way to write such morphisms is by valid sequences in
$\ens{OO,OP,PP}$.  Vertical composition of cells is simply composition
in $\moo_V$, i.e., concatenation. The most intuitive way to introduce
horizontal composition is to depict basic cells as triangles
\begin{center}
  \diag{%
    O \& O \\
    \& P
  }{%
    (m-1-1) edge[pro] (m-1-2) %
    edge[pro] (m-2-2) %
    (m-1-2) edge (m-2-2) %
  }
  \hfil 
  \diag{%
    O \& P \\
    P \& 
  }{%
    (m-1-1) edge[pro] (m-1-2) %
    edge (m-2-1) %
    (m-2-1) edge[pro] (m-1-2) %
  }
  \hfil 
  \diag{%
    P \& P \\
    O \& 
  }{%
    (m-1-1) edge[pro] (m-1-2) %
    edge (m-2-1) %
    (m-2-1) edge[pro] (m-1-2) %
  }
  \hfil
  \diag{%
    O \& P \\
    \& O\rlap{.}
  }{%
    (m-1-1) edge[pro] (m-1-2) %
    edge[pro] (m-2-2) %
    (m-1-2) edge (m-2-2) %
  }
\end{center}
General cells are obtained by stacking up such basic triangles.
\begin{example}
  To see the connection with standard game models, consider a typical
  play like below left, whose scheduling is the cell below right in
  $\moo$. Written as a sequence, this morphism is $(OO,OP,PP,OP,OO)$ —
  the sequence of involved horizontal morphisms.
  \begin{center}
    \Diag(0.2,1){
      \path[->] (m-1-1) edge (m-1-2);
    }{
      \Nat  \& \Nat  \\
      \& q \\
      q \\ \\
      6 \\
      \& 12
    }{
      (m-3-1) edge (m-2-2)
      (m-5-1) edge[bend left] (m-3-1)
      (m-6-2) edge[bend right] (m-2-2)
    }
    \hfil
    \diag{%
      O \& O \\
      P \& P \\
      O \& O
    }{%
      (m-1-1) edge[pro] (m-1-2) %
      edge[pro] (m-2-2) %
      edge (m-2-1) %
      (m-1-2) edge (m-2-2) %
      (m-2-1) edge[pro] (m-2-2) %
      (m-2-1) edge[pro] (m-2-2) %
      edge (m-3-1) %
      (m-3-1) edge[pro] (m-2-2) %
      (m-2-2) edge (m-3-2) %
      (m-3-1) edge[pro] (m-3-2) %
    }
  \end{center}
  The first $q$ move corresponds to the top triangle. The non-trivial
  vertical side of the latter shows that the move in question is
  played on the right-hand game.  The next $q$ move corresponds to the
  next triangle downwards, and so on.
\end{example}
Depicting cells as stacks of triangles yields the following inductive
definition of horizontal composition:
\begin{itemize}
\item If there is an ‘outwards’ bottom triangle, i.e., the bottom of $\alpha $ and $\beta $
  look like either of
  \begin{center}
    \diag{%
      O \& O \& X \\
        \&   \& X^\bot  
      }{%
        (m-1-1) edge[pro] (m-1-2) %
        (m-1-2) edge[pro] (m-1-3) %
        edge[pro] (m-2-3) %
        (m-1-3) edge (m-2-3) %
    }
    \hfil
        \diag{%
      X \& P \& P \\
       X^\bot  
      }{%
        (m-1-1) edge[pro] (m-1-2) %
        (m-1-2) edge[pro] (m-1-3) %
        (m-1-1) edge (m-2-1) %
        (m-2-1) edge[pro] (m-1-2) %
      }
  \end{center}
  with $X \in  \ens{O,P}$ and $X^\bot  $ denoting the other player, then the
  composite is obtained by composing the rest of $\alpha $ and $\beta $, and
  appending the obvious triangle ($(OX,OX^\bot  )$, resp.\ $(XP,X^\bot  P)$).
\item Otherwise, there is a pair of interacting bottom triangles, as
  in
  \begin{center}
    \diag{%
      O \& X \& P \\
        \& X^\bot  \rlap{,}
      }{%
        (m-1-1) edge[pro] (m-1-2) %
        edge[pro] (m-2-2) %
        (m-1-2) edge[pro] (m-1-3) %
        (m-2-2) edge[pro] (m-1-3) %
        (m-1-2) edge (m-2-2) %
    }
  \end{center}
\end{itemize}
in which case the composite is simply the composite of the rest of $\alpha $
and $\beta $ -- which is precisely where game semantical \emph{hiding} is
encoded in $\moo$.
\begin{rem}
  Ambiguous configurations as below, where we would not know which
  triangle to put last in the composite, cannot occur. Indeed,
  existence of the left-hand triangle forces $M = P$, while existence
  of the right-hand one forces $M = O$.
  \begin{center}
    \diag{%
      L  \& M \& R \\
      L^\bot   \&   \& R^\bot  
      }{%
        (m-1-1) edge[pro] (m-1-2) %
        edge (m-2-1) %
        (m-1-2) edge[pro] (m-1-3) %
        edge[pro] (m-2-3) %
        (m-1-3) edge (m-2-3) %
        (m-2-1) edge[pro] (m-1-2) %
    }

  \end{center}  
\end{rem}

Melliès obtains:
\begin{prop}
  Composition as just defined makes $\moo$ into a double category.
  \label{prop:moo_dcat}
\end{prop}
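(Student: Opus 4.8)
The plan is to verify the double-category axioms through Proposition~\ref{prop:dcat_monad}: it suffices to exhibit the span $l,r \colon \moo_V \rightrightarrows \moo_v$ of~\eqref{eq:lr} as the carrier of a monad in $\Span(\Cat)$, with unit $\eta$ the assignment of horizontal identities and multiplication $\mu$ the horizontal composition just defined. Since $\moo_v$ and $\moo_V$ are free categories, vertical composition is concatenation of paths and is automatically associative and unital, so all the content lies in the horizontal structure, namely in showing that $\eta$ and $\mu$ are well-defined and satisfy the monad laws. As the definition of $\mu$ is recursive on the bottom rows of cells, the natural induction throughout is on the total number of basic triangles stacked in the cells involved.

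First I would check that $\mu$ is well defined. Its domain is the pullback $\moo_V \times_{\moo_v} \moo_V$, whose morphisms are composable pairs $(\alpha,\beta)$ of cells sharing a middle vertical border; this shared border makes the bottom state of the middle column unambiguous, so the bottom row over the three columns $A,B,E$ is always one of the configurations displayed before the statement. By the remark preceding this proposition the ``outwards'' and ``interacting'' cases are mutually exclusive and, being also exhaustive, exactly one clause of the definition fires; each clause strictly decreases the total triangle count (removing one triangle in the outwards case, two in the interacting one), so the recursion terminates at a pair of trivial cells. That $\mu$ is moreover functorial --- a map of spans in $\Span(\Cat)$ --- is precisely the interchange law, since composition both in $\moo_V$ and in the pullback is vertical composition of cells; I would obtain it by the same induction, peeling the bottom row on both sides of the desired equation and applying the induction hypothesis to the remainders. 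The unit functor $\eta$ sends an object $X$ to $id^\bullet_X$ (so $O \mapsto OO$ and $P \mapsto PP$) and a vertical move to the evident ``diagonal'' cell (a single $P$-move to the path $OO,OP,PP$); the unit laws $\mu(\eta \bullet 1) = 1 = \mu(1 \bullet \eta)$ then follow by induction, because against a horizontal identity the bottom configuration is always ``outwards'', so the clause simply copies the bottom triangle of the non-trivial argument and recurses, rebuilding the cell unchanged.

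The associativity law $\mu(\mu \bullet 1) = \mu(1 \bullet \mu)$, for a composable triple $\alpha \colon A \proto B$, $\beta \colon B \proto C$, $\gamma \colon C \proto D$, is the crux, and I expect it to be the main obstacle. The argument is once more by induction, but one must now analyse the bottom row across all four columns $A,B,C,D$ and establish a local confluence: whichever grouping first processes the bottom, the same outermost move of the composite is produced and the problem reduces to a strictly smaller triple. The delicate point is game-semantical hiding, where two internal moves, one on $B$ and one on $C$, might compete to be hidden first. The resolving observation is that each basic triangle carries a single move: the bottom triangle of the middle cell $\beta$ is either a move on $B$ or a move on $C$, and this alone determines which internal move (if any) is hidden next, so the two groupings necessarily agree. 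The switching constraint encoded by the state diagram --- the very fact exploited in the remark, that a left move forces the adjacent column to $P$ and a right move forces it to $O$ --- is what rules out the remaining problematic overlaps. Once this confluence is in place the induction closes, and together with well-definedness, interchange, and the unit laws it makes the span a monad in $\Span(\Cat)$; by Proposition~\ref{prop:dcat_monad}, $\moo$ is a double category.
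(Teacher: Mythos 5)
Your overall strategy is sound, and it is essentially the only option available: the paper itself gives no proof of this proposition (it is quoted from Melli\`es), so verifying that $(\eta,\mu)$ makes the span $l,r\colon \moo_V \rightrightarrows \moo_v$ a monad in $\Span(\Cat)$, by induction on the number of basic triangles, with functoriality of $\mu$ playing the role of interchange, is exactly what a complete proof must do. Your well-definedness argument (mutual exclusivity of the clauses plus termination) is correct, and your confluence analysis for associativity is right in substance: the bottom triangle of the middle cell $\beta$ is a move on $B$ or on $C$, which determines which of the two possible hidings is enabled, and the polarity constraints ($a \le b \le c \le d$ across the four columns) exclude every other critical pair, so at each stage at most one clause of a ternary recursion fires.

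However, the unit-law step is wrong as stated. It is \emph{not} true that ``against a horizontal identity the bottom configuration is always outwards''. The horizontal identity cell on a vertical morphism $p$ of length $n$ consists of $2n$ triangles, and $n$ of them are moves on the \emph{shared} column; these must be consumed by the interacting (hiding) clause, not copied. Concretely, take $\alpha = (OO,OP)$, a single move on its right-hand column, with right border $q \colon O \to P$, and compose it with $id^\bullet_q = (OO,OP,PP)$ placed on the right. The bottom triangle of $id^\bullet_q$ is its $OP \to PP$ edge, a move on the shared column; the bottom (and only) triangle of $\alpha$ is also a move on the shared column; hence the very first step of the recursion is the hiding clause. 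The composite does come out equal to $\alpha$, but by a different mechanism than the one you describe: each move of $\alpha$ on the shared column is hidden against the matching triangle of the identity cell, after which the identity cell's duplicate of that same move on the outer column is emitted in its place. So the correct induction hypothesis is not ``the recursion copies $\alpha$'s triangles unchanged'', but ``the outwards steps emit, in order, exactly the moves of $\alpha$, with the identity cell's outer-column copies substituting for the hidden ones''. As written, your induction would break at the first move of $\alpha$ on the shared column, which is to say on every cell whose relevant border is not an identity.
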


\subsection{Simple games and strategies}
This is where Melliès's approach is novel. He simply puts:
\begin{defi}
  A \emph{game} is a functor to $\moo _v $, and a \emph{strategy}
  from $p\colon  A \to  \moo _v $ to  $q\colon  B \to  \moo _v $ is a commuting diagram
  \begin{equation}
    \diag{%
      A \& S \& B \\
      \moo _v  \& \moo_V \& \moo _v 
    }{%
      (m-1-1) edge[<-,labela={}] (m-1-2) %
      edge[labell={p}] (m-2-1) %
      (m-2-1) edge[<-,labelb={}] (m-2-2) %
      (m-1-2) edge[labelr={}] (m-2-2) %
      (m-1-2) edge[labela={}] (m-1-3) %
      (m-2-2) edge[labelb={}] (m-2-3) %
      (m-1-3) edge[labelr={q}] (m-2-3) %
    }
    \label{eq:strat}
  \end{equation}
  in $\Cat$.
\end{defi}
Intuitively, in a game $p\colon  A \to  \moo_ v$, the objects of $A$ are plays,
and $p$ indicates which player should play next. Morphisms are
sequences of moves, whose number and polarity is again indicated by
$p$. The notion of strategy may be understood as follows: the limit
  \begin{center}
    \Diag{%
      \pbk[2em]{m-2-1}{m-1-2}{m-2-3} %
    }{%
      \& \P _{A,B} \&  \\
      A \&  \& B \\      
      \moo _v  \& \moo_V \& \moo _v 
    }{%
      (m-2-1) edge[<-,labela={}] (m-1-2) %
      edge[labell={p}] (m-3-1) %
      (m-3-1) edge[<-,labelb={}] (m-3-2) %
      (m-1-2) edge[labelr={}] (m-3-2) %
      (m-1-2) edge[labela={}] (m-2-3) %
      (m-3-2) edge[labelb={}] (m-3-3) %
      (m-2-3) edge[labelr={q}] (m-3-3) %
    }
  \end{center}
  may be thought of as a category of plays on the arrow game $A \to  B$,
  and the induced map $S \to  \P _{A,B}$ describes which plays are accepted
  by the strategy.

  Recalling the weak double category $\Span(\Cat)$ from
  Definition~\ref{def:span:cat}, we observe that a game is a morphism
  to $\moo _v $ in $\Span(\Cat)_v $, while a strategy $A \to  B$ is merely a
  cell
  \begin{center}
    \diag{%
      A \& B \\
      \moo _v  \& \moo _v .
    }{%
      (m-1-1) edge[pro,twoa={S}] (m-1-2) %
      edge[labell={p}] (m-2-1) %
      (m-2-1) edge[pro,twob={\moo_V}] (m-2-2) %
      (m-1-2) edge[labelr={q}] (m-2-2) %
      (a) edge[cell=0] (b) %
    }
  \end{center}
  This provides a simple way of composing strategies, using the monad
  structure of $\moo$ (given by Propositions~\ref{prop:dcat_monad}
  and~\ref{prop:moo_dcat}): the composite of $S\colon  A \to  B$ and $T\colon  B \to  C$
  is simply the pasting
  \begin{equation}
    \diag{%
      A \& B \& C \\
      \moo _v  \& \moo _v  \& \moo _v  %
    }{%
      (m-1-1) edge[pro,twoa={S}] (m-1-2) %
      edge[labell={p}] (m-2-1) %
      (m-2-1) edge[pro,twob={\moo_V}] (m-2-2) %
      edge[bend right,pro,twobelow={bbb}{\moo_V}] (m-2-3) %
      (m-1-2) edge[labelr={q}] (m-2-2) %
      (m-1-2) edge[pro,twoabove={aa}{T}] (m-1-3) %
      (m-2-2) edge[pro,twobelow={bb}{\moo_V}] (m-2-3) %
      (m-1-3) edge[labelr={r}] (m-2-3) %
      (a) edge[cell=0] (b) %
      (aa) edge[cell=0] (bb) %
      (m-2-2) edge[celllr={0}{-0.1},labelr={\ \mu }] (bbb) %
    }
    \label{eq:compo:slice}
  \end{equation}
  where $\mu $ denotes the monad multiplication for $\moo$, i.e.,
  horizontal composition of cells.

  Strategies are thus equipped with weak double category structure
  (although Melliès only considers the underlying bicategory) by
  applying the following general result to the monad $\moo$ in
  $\Span(\Cat)$:
  \begin{thm}\label{thm:basic}
    Given any monad $M_V\colon  M_v  \proto M_v $ in a weak double category $\CCC$,
    there is a \emph{slice} weak double category $\CCC/M$ whose
    \begin{itemize}
    \item vertical category is $(\CCC/M)_v  = \CCC_v /M_v $;
    \item vertical category of cells is $(\CCC/M)_V = \CCC_V/M_V$;
    \item horizontal composition of cells is given by pasting
      with monad multiplication, as in~\eqref{eq:compo:slice}; and
    \item horizontal identity on $p\colon  A \to  M_v $ is the pasting
      \begin{equation}
        \diag(.6,2){%
          A \& A \\
          M_v  \& M_v . %
        }{%
          (m-1-1) edge[pro,twoa={}] (m-1-2) %
          edge[labell={p}] (m-2-1) %
          (m-2-1) edge[pro,identity,twob={}] node[coordinate] (aa) {} (m-2-2) %
          edge[pro,bend right=40,labelb={M_V}] node[coordinate] (bb) {} (m-2-2) %
          (m-1-2) edge[labelr={p}] (m-2-2) %
          (a) edge[cell=0,labelr={id^\bullet_p }] (b) %
          (aa) edge[cell=0.1,labelr={\eta }] (bb) %
        }
        \label{eq:identity:slice}
      \end{equation}
    \end{itemize}
  \end{thm}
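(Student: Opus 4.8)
The plan is to verify directly that the stated data satisfy the axioms of a weak double category, viewing the construction as a double-categorical generalisation of the classical fact that the slice $\mathcal{M}/m$ of a monoidal category over a monoid $m$ is again monoidal. I would first dispatch the vertical structure, which is essentially free: since $(\CCC/M)_v = \CCC_v/M_v$ and $(\CCC/M)_V = \CCC_V/M_V$ are ordinary slice categories, they are automatically categories. Moreover, the source and target functors $l,r\colon \CCC_V \to \CCC_v$ of $\CCC$ satisfy $l(M_V) = r(M_V) = M_v$, and hence induce functors $\CCC_V/M_V \rightrightarrows \CCC_v/M_v$; concretely, an object $\alpha\colon S \Rightarrow M_V$ of $\CCC_V/M_V$ with vertical borders $p,q$ is sent to $p$ and $q$ as objects of $\CCC_v/M_v$. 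By Proposition~\ref{prop:dcat_monad} and the span presentation~\eqref{eq:lr}, it then remains to equip this span with horizontal composition and identities and to produce the coherence isomorphisms.

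Next I would check that horizontal composition and identities, defined by pasting with $\mu$ and $\eta$ as in~\eqref{eq:compo:slice} and~\eqref{eq:identity:slice}, are well defined and functorial. That they land in the slice is immediate: since $\mu$ and $\eta$ are \emph{special} cells targeting $M_V$, the pastings $\mu \circ (\beta \bullet \alpha)$ and $\eta \circ \mathrm{id}^\bullet_p$ are again cells into $M_V$, with the correct borders. Functoriality in the pair $(\alpha,\beta)$ — compatibility with morphisms of $\CCC_V/M_V$ and with identities — follows from functoriality of horizontal composition $\bullet$ and of vertical composition in $\CCC$, together with the interchange law. The key structural observation here is that the underlying horizontal morphism of the composite $(S,\alpha)$ with $(T,\beta)$ is simply $T \bullet S$, so the forgetful functor $U\colon \CCC_V/M_V \to \CCC_V$, $(S,\alpha)\mapsto S$, \emph{strictly preserves horizontal composition of underlying morphisms}.

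The substance of the proof lies in the coherence isomorphisms, and this is where I expect the main obstacle. I would define the associator and unitors of $\CCC/M$ to have underlying cells precisely those of $\CCC$ (on the underlying horizontal morphisms $S,T,R$). The crux is then to verify that these underlying isomorphisms are morphisms in the slice, i.e.\ that they commute with the structure maps into $M_V$: for the associator this amounts exactly to the monad associativity law $\mu \circ (\mu \bullet M_V) = \mu \circ (M_V \bullet \mu) \circ a$, and for the unitors to the unit laws $\mu \circ (\eta \bullet M_V) = \lambda$ and $\mu \circ (M_V \bullet \eta) = \rho$ (with $a,\lambda,\rho$ the coherence cells of $\CCC$), combined in each case with interchange and functoriality. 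This is the one genuinely computational point. Once it is settled, the coherence axioms (pentagon and triangle) for $\CCC/M$, as well as the interchange law, follow \emph{for free}: the functor $U$ is faithful and, by the previous paragraph, preserves horizontal composition of underlying morphisms, so any candidate coherence equation in $\CCC/M$ projects to the corresponding equation in $\CCC$, where it already holds — whence it holds in $\CCC/M$ by faithfulness. Naturality of the associator and unitors likewise transfers from $\CCC$ along $U$.
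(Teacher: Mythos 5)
Your proof is correct, but it follows a genuinely different route from the paper --- indeed, the paper states Theorem~\ref{thm:basic} without any written proof, and the only argument it records for this kind of result is the proof of the generalisation, Theorem~\ref{thm:facto}. There, the strategy is to invoke the coherence theorem for weak double categories~\cite[Theorem~7.5]{GrandisPare} (together with a universe enlargement) to reduce to the case where $\CCC$ is a strict double category; once $\CCC$ is strict, composition of cells in the slice is as in $\CCC$, and for Theorem~\ref{thm:basic} the monad laws hold on the nose, so horizontal composition in $\CCC/M$ is \emph{strictly} associative and unital and essentially nothing remains to verify (the weak associators appearing in Theorem~\ref{thm:facto} arise only from the factorisation, via unique lifting). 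You instead work directly in the weak $\CCC$: you take the coherence cells of $\CCC$ as the coherence cells of $\CCC/M$, check that the monad laws make them morphisms of the slice, and transfer pentagon, triangle, naturality and interchange along the faithful, composition-preserving forgetful functor $\CCC_V/M_V \to \CCC_V$. Your route buys self-containedness (no appeal to an external coherence theorem, no size caveat) and it puts the structure on exactly the data named in the statement, whereas the paper's strictification step implicitly requires transporting the slice structure back along the strictification equivalence; the price is that you must do the computations explicitly. One small point you should make explicit in those computations: showing that the associator is a slice morphism needs, besides the monad law $\mu \circ (\mu \bullet M_V) = \mu \circ (M_V \bullet \mu) \circ a$ and interchange, the \emph{naturality} of $a$ with respect to arbitrary cells, so as to rewrite $(\gamma \bullet (\beta \bullet \alpha)) \circ a_{S,T,U}$ as $a_{M_V,M_V,M_V} \circ ((\gamma \bullet \beta) \bullet \alpha)$; this is part of the definition of a weak double category, so it is available, but ``interchange and functoriality'' alone do not deliver it, and the same remark applies to the unitor checks.
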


  \begin{prop}
    Melliès's bicategory of simple games is the underlying bicategory
    of $\Span(\Cat)/\moo$.
  \end{prop}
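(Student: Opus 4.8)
The plan is to unfold the slice construction of Theorem~\ref{thm:basic} at the monad $\moo$ in $\Span(\Cat)$ and match it, level by level, against Melliès's data. Since Theorem~\ref{thm:basic} produces a \emph{weak double category} $\Span(\Cat)/\moo$ whereas Melliès works with a bicategory, we take its underlying bicategory: recall that the underlying bicategory of a weak double category $\CCC$ has as objects the objects of $\CCC_v$, as $1$-cells the horizontal morphisms, and as $2$-cells the \emph{special} cells (those with identity left and right borders), with $1$-cell composition induced by horizontal composition. It therefore suffices to identify each of these pieces with the corresponding piece of Melliès's bicategory, and then to check that the induced composition and identities agree.

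For objects and $1$-cells this is a direct unfolding. By Theorem~\ref{thm:basic}, $(\Span(\Cat)/\moo)_v = \Cat/\moo_v$, whose objects are exactly the functors $p\colon A \to \moo_v$, i.e., games. The $1$-cells are the horizontal morphisms of $\Span(\Cat)/\moo$, i.e., the objects of $(\Span(\Cat)/\moo)_V = \Span(\Cat)_V/\moo_V$. Unfolding Definition~\ref{def:span:cat}, such an object is a cell of $\Span(\Cat)$ whose lower border is the span $\moo_V$, that is, precisely a commuting diagram of the shape~\eqref{eq:strat}; reading off its two legs recovers the source game $p$ and target game $q$. Hence $1$-cells are exactly strategies, with the expected source and target.

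It remains to compare composition, identities, and $2$-cells. By Theorem~\ref{thm:basic} the horizontal composite of $1$-cells $S\colon p \to q$ and $T\colon q \to r$ is the pasting~\eqref{eq:compo:slice} with the monad multiplication $\mu$ of $\moo$: this computes the interaction of $S$ and $T$ by pullback over the shared game, then hides its moves by postcomposing with $\mu$, which by construction (Propositions~\ref{prop:dcat_monad} and~\ref{prop:moo_dcat}) is the triangle-stacking horizontal composition of cells of $\moo$. This is exactly Melliès's composition of strategies. Likewise the horizontal identity~\eqref{eq:identity:slice} postcomposes with the unit $\eta$ and yields Melliès's identity (copycat) strategy. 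Finally, the $2$-cells are the special cells, which by the same unfolding are span morphisms over the shared games commuting with the maps into $\moo_V$, i.e., morphisms of strategies, with vertical and horizontal cell composition inherited from $\Span(\Cat)$.

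The main obstacle is bookkeeping rather than conceptual. One must verify that the abstract composite~\eqref{eq:compo:slice}, computed as a pullback of spans of categories followed by postcomposition with $\mu$, literally reproduces Melliès's explicit interaction-and-hiding recipe on plays; this reduces to the triangle-stacking description of $\mu$ recalled in Section~\ref{sec:mellies}. One must also confirm that the weak associators and unitors which $\Span(\Cat)/\moo$ inherits from the pullback coherence of $\Span(\Cat)$ agree with those of Melliès's bicategory. In fact, since horizontal composition in $\moo$ is strictly associative and unital, these coherence cells should reduce to the canonical isomorphisms coming purely from $\Span(\Cat)$, so that the identification is as strict as the chosen models of pullback allow.
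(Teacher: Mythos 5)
Your proposal is correct and follows essentially the same route as the paper, which states this proposition without separate proof precisely because the preceding discussion already identifies games with vertical morphisms into $\moo_v$, strategies with cells over $\moo_V$ (diagram~\eqref{eq:strat}), and Melliès's composition with the pasting~\eqref{eq:compo:slice} used in Theorem~\ref{thm:basic}. Your careful unfolding of the underlying-bicategory construction and the matching of composition, identities, and $2$-cells is exactly the intended (definitional) argument, just made explicit.
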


  A weak double category with trivial vertical category is nothing but
  a monoidal category.  In that case, the theorem reduces to the
  following well-known result used, e.g., by~\citet{Webergenmorph}:
  \begin{cor}
    The slice of a monoidal category over a monoid is again monoidal.
  \end{cor}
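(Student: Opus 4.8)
The plan is to obtain the corollary as the instance of Theorem~\ref{thm:basic} in which $\CCC$ has trivial vertical category, after setting up the standard dictionary between such weak double categories and monoidal categories. First I would make this dictionary precise. Given a monoidal category $(\mathcal V, \otimes, I)$, the associated weak double category $\CCC$ has a single object $\star$ and only the identity vertical morphism, its horizontal morphisms $\star \proto \star$ are the objects of $\mathcal V$, its cells are the morphisms of $\mathcal V$ (all cells being special, since both vertical borders are necessarily identities), horizontal composition is $\otimes$, and the horizontal identity is $I$. Thus $\CCC_v \cong \mathbf 1$ and $\CCC_V \cong \mathcal V$, and conversely any $\CCC$ with $\CCC_v \cong \mathbf 1$ arises this way.

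Next I would translate the hypothesis. A monad $M_V \colon M_v \proto M_v$ in $\CCC$ must have $M_v = \star$, so $M_V$ is an object $M$ of $\mathcal V$; its multiplication and unit are cells $\mu \colon M \otimes M \to M$ and $\eta \colon I \to M$, and the monad laws are exactly the associativity and unit laws of a monoid. Applying Theorem~\ref{thm:basic} then produces a slice weak double category $\CCC/M$, and the key observation is that its vertical category $(\CCC/M)_v = \CCC_v/M_v = \mathbf 1/\star$ is again trivial. By the dictionary, $\CCC/M$ is therefore (the weak double category of) a monoidal category, which is the assertion to be proved; it remains only to identify this monoidal structure explicitly.

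For that I would unwind the data of Theorem~\ref{thm:basic}. The underlying category is $(\CCC/M)_V = \CCC_V/M_V = \mathcal V/M$, the ordinary slice whose objects are arrows $f \colon A \to M$ and whose morphisms are commuting triangles over $M$. The tensor product, being horizontal composition as in~\eqref{eq:compo:slice}, sends $f \colon A \to M$ and $g \colon B \to M$ to the horizontal composite of their defining cells postcomposed with $\mu$, that is to $\mu \circ (f \otimes g) \colon A \otimes B \to M$; the unit, given by~\eqref{eq:identity:slice}, is the horizontal identity $I$ pasted with $\eta$, namely $\eta \colon I \to M$. These coincide with the usual tensor and unit of the slice of $\mathcal V$ over the monoid $M$, and the associator and unitors are inherited from the coherence of $\CCC/M$ supplied by the theorem, specialising to the coherence isomorphisms of $\mathcal V$ mediated by the monoid laws. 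I expect the only genuine work to be this final bookkeeping — checking that the coherence produced abstractly by Theorem~\ref{thm:basic} matches, under the dictionary, the familiar coherence of the slice monoidal category, rather than merely agreeing up to isomorphism; everything else is a direct specialisation requiring no new idea.
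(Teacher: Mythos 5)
Your proposal is correct and follows exactly the route the paper intends: the paper derives this corollary by observing that a weak double category with trivial vertical category is precisely a monoidal category, so that Theorem~\ref{thm:basic} specialises directly (monad becomes monoid, $(\CCC/M)_V$ becomes the slice category, pasting with $\mu$ and $\eta$ becomes the usual tensor and unit of the slice). Your write-up merely spells out the dictionary that the paper leaves implicit.
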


  \section{Recovering simple games}\label{sec:simple}
  \subsection{Restricting to discrete fibrations}
  The present work was prompted by Melliès's observation that
  $\Span(\Cat) / \moo$ is not quite equivalent to the standard category
  of simple games and strategies.  Indeed, one might have expected
  that, restricting to strategies whose underlying functor
  $S \to  \P _{A,B}$ is an inclusion, we would obtain an equivalent
  category. But this is not the case: Melliès's games and strategies
  are intrinsically more general. This is even emphasised as a
  feature, as it has the advantage of smoothing things up in the
  context of asynchronous games, where a different clock is used.  In
  our sequential setting, the extra generality is twofold. First, the
  considered games may have no time origin — concretely there is no
  empty play.
  \begin{example}
    A symptom is that the categories $A$ may not be well founded,
    i.e., they may contain infinite chains $… \to  a_n \to  … \to  a_0 $.
  \end{example}
  The second source of extra generality is that games may feature
  ‘compound moves’, i.e., indecomposable morphisms whose image in
  $\moo _v $ has length $>1$.
  \begin{example}
    Take for $A$,
    e.g., the ordinal $\one$ viewed as a category, and map its unique
    morphism to $OPOP$ in $\moo _v $.
  \end{example}
  This thus raises the question: can standard simple games be
  recovered by refining the abstract Theorem~\ref{thm:basic}?

  The first step towards this is to characterise the games
  $A \to  \moo _v $ that correspond to standard simple games.  This is
  easy: by definition, standard simple games are trees, which may be
  defined as presheaves over the ordinal $\omega $. But presheaves are
  equivalent to discrete fibrations, hence the idea of restricting
  $(\Span(\Cat)/\moo)_v $ to discrete fibrations. However, this does not
  quite work, as $\moo _v $ lacks a ‘time origin’: presheaves on
  $\moo _v $ describe games in which there is no first move — all moves
  have predecessors. But if we slice $\moo _v $ under $O$, then we get
  it right, as we have $O/\moo_v  \iso  \omega $. Similarly, $OO/\moo_V$ describes
  scheduling in the arrow category starting from $OO$.

  Funnily enough, $OO\colon  O \proto O$, viewed as a horizontal
  endomorphism in $\moo$, is a comonad (the identity comonad on $O$),
  hence we may apply Theorem~\ref{thm:basic} to obtain:
  \begin{cor}
    The slice $\H := OO/\moo$ forms a double category.
  \end{cor}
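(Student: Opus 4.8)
The plan is to obtain $\H$ as a dual instance of Theorem~\ref{thm:basic}. Although the theorem is phrased for monads while $OO$ is presented as a comonad, the mismatch is exactly resolved by passing to the vertical dual. First I would observe that $OO$ is nothing but the horizontal identity $id^\bullet_O$ in the strict double category $\moo$, and that, as such, it carries a canonical comonad structure: the comultiplication $OO \to OO \bullet OO$ and the counit $OO \to id^\bullet_O$ are the coherence cells for horizontal composition, which here are genuine identities since $\moo$ is strict and $OO \bullet OO = OO$.

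Next I would recall that every weak double category $\CCC$ has a \emph{vertical dual}, obtained by reversing vertical morphisms — hence reversing the composition of $\CCC_V$ — while keeping objects, horizontal morphisms, and horizontal composition untouched; this is again a weak double category. Applying this to $\moo$, write $\moo^{\mathit{op}}$ for the result, so that $(\moo^{\mathit{op}})_v = (\moo_v)^{\mathit{op}}$ and $(\moo^{\mathit{op}})_V = (\moo_V)^{\mathit{op}}$. The key point is that vertical dualisation reverses the arrows of $\CCC_V$, so the comultiplication of the comonad $OO$ in $\moo$ becomes a \emph{multiplication} in $(\moo_V)^{\mathit{op}}$; dually for the counit. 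In other words, $OO$ is a monad in $\moo^{\mathit{op}}$. Theorem~\ref{thm:basic} then applies and yields a weak double category $\moo^{\mathit{op}}/OO$, with vertical category $(\moo_v)^{\mathit{op}}/O$, vertical category of cells $(\moo_V)^{\mathit{op}}/OO$, and horizontal composition given by pasting with this (trivial) multiplication as in~\eqref{eq:compo:slice}.

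Finally I would dualise back, setting $\H := (\moo^{\mathit{op}}/OO)^{\mathit{op}}$. Since the vertical dual of a weak double category is again a weak double category, $\H$ is one, and unwinding the definitions recovers exactly the announced data: its vertical category is $\bigl((\moo_v)^{\mathit{op}}/O\bigr)^{\mathit{op}} = O/\moo_v \iso \omega$, its vertical category of cells is $OO/\moo_V$, its horizontal composition is pasting with the comonad comultiplication of $OO$ (the vertical dual of~\eqref{eq:compo:slice}), and its horizontal identities are the dual of~\eqref{eq:identity:slice}. This justifies the notation $\H = OO/\moo$. For strictness, note that because $OO$ is the identity (co)monad and $\moo$ is a strict double category, every coherence cell entering the construction is an identity, so $\H$ is in fact a genuine (strict) double category, as claimed.

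The main work is the bookkeeping of this duality: checking that vertical dualisation really does send weak double categories to weak double categories and really does exchange monads and comonads, and that the slice construction of Theorem~\ref{thm:basic} is compatible with it, so that the twice-dualised object is precisely the coslice with the stated vertical and cell categories. The only genuinely delicate point is confirming that the comultiplication and counit of $OO$ assemble, under this duality, into the horizontal composition and identities described by the dualised forms of~\eqref{eq:compo:slice} and~\eqref{eq:identity:slice}; once $OO$ is recognised as the identity comonad this collapses to a routine verification, so I do not expect a substantial obstacle beyond setting up the dualisation carefully.
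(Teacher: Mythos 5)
Your proposal is correct and is essentially the paper's own argument: the paper likewise observes that $OO$ is the identity comonad on $O$ and applies Theorem~\ref{thm:basic} in its dual (coslice) form, adding only the remark that a strict variant of the theorem is needed. Your explicit passage through the vertical dual, and your check that strictness of $\moo$ together with triviality of the identity comonad makes all coherence cells identities, merely spell out what the paper leaves implicit.
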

  \begin{rem}
    We need in fact a variant of Theorem~\ref{thm:basic} for strict
    double categories.
  \end{rem}
  Taking $\H$ (standing for \emph{horloge}, french for clock) as a
  replacement for $\moo$, we get the desired property that discrete
  fibrations over $\H_v $ are equivalent to $\psh{\omega }$.

  In a similar vein, in recent work on game
  semantics~\cite{HirschoDoubleCats,OngTsukada,DBLPconf/lics/EberhartH18},
  a concurrent notion of strategy was defined as presheaves on plays.
  It thus would seem natural to also restrict
  strategies~\eqref{eq:strat} to ensure that the induced functor
  $S \to  \P _{A,B}$ is a discrete fibration. This may be enforced
  directly:
  \begin{lem}
    Given a commuting diagram of functors
    \begin{center}
      \diag{%
        A \& S \& B \\
        X \& T \& Y
      }{%
        (m-1-1) edge[<-,labela={}] (m-1-2) %
        edge[labell={p}] (m-2-1) %
        (m-2-1) edge[<-,labelb={l}] (m-2-2) %
        (m-1-2) edge[labelr={m}] (m-2-2) %
        (m-1-2) edge[labela={}] (m-1-3) %
        (m-2-2) edge[labelb={r}] (m-2-3) %
        (m-1-3) edge[labelr={q}] (m-2-3) %
      }
    \end{center}
    where $p$ and $q$ are discrete fibrations, letting $P$ denote the
    limit of the subdiagram $$A \to  X \ot  T \to  Y \ot  B,$$ the induced functor
    $S \to  P$ is a discrete fibration iff the middle functor $m\colon  S \to  T$
    is.
  \end{lem}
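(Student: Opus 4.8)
The plan is to route the induced functor $u\colon S \to P$ through the canonical projection $\pi\colon P \to T$, prove that $\pi$ is a discrete fibration, and then conclude by a cancellation property of discrete fibrations. Concretely, $P$ is the category of triples $(a,t,b)$ with $p(a) = l(t)$ and $r(t) = q(b)$ (morphisms being triples commuting with $p,l,r,q$); the functor $u$ sends $s$ to the triple of its images in $A$, $T$, $B$, and the projection $\pi\colon P \to T$, $(a,t,b)\mapsto t$, satisfies $\pi \circ u = m$ by commutativity of the outer diagram. So the statement reduces to: \emph{given that $\pi$ is a discrete fibration, $u$ is one iff $m = \pi\circ u$ is}.

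The key step is that $\pi$ is a discrete fibration. I would argue this directly via unique lifting: given $(a,t,b)\in P$ and a morphism $t'\to t$ in $T$, applying $l$ gives $l(t')\to l(t)=p(a)$, which lifts $p$-uniquely to some $a'\to a$ in $A$ since $p$ is a discrete fibration; symmetrically $r(t')\to r(t)=q(b)$ lifts $q$-uniquely to $b'\to b$; these assemble into a unique morphism $(a',t',b')\to (a,t,b)$ over $t'\to t$. Equivalently, and more conceptually, $P = (A\times_X T)\times_T (T\times_Y B)$, and since discrete fibrations are stable under pullback, both $A\times_X T \to T$ and $T\times_Y B \to T$ are discrete fibrations; their fibre product over $T$, which is $\pi$, is then a composite of discrete fibrations (pullback of one along the other followed by the other), hence again a discrete fibration.

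It then remains to run the cancellation argument. The forward implication is immediate: if $u$ is a discrete fibration, then $m=\pi\circ u$ is a composite of discrete fibrations, hence a discrete fibration. For the converse, assume $m$ is a discrete fibration and lift a morphism $\beta\colon p'\to u(s)$ in $P$. Applying $\pi$ yields $\pi(\beta)\colon \pi(p')\to m(s)$ in $T$, which lifts uniquely along $m$ to some $\alpha\colon s'\to s$ in $S$ with $m(\alpha)=\pi(\beta)$. Now $u(\alpha)$ and $\beta$ are both morphisms into $u(s)$ in $P$ with the same image $\pi(\beta)$ under $\pi$, so the uniqueness half of the discrete fibration property of $\pi$ forces $u(\alpha)=\beta$; thus $\alpha$ is a $u$-lift of $\beta$, and its uniqueness follows from that of the $m$-lift. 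The only mildly delicate point is this last step, where it is precisely the \emph{uniqueness} of $\pi$-lifts that identifies $u(\alpha)$ with $\beta$ and thereby delivers the converse.
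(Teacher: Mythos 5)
Your proposal is correct and follows essentially the same route as the paper: factor the induced functor $u\colon S\to P$ through the projection $\pi\colon P\to T$, show $\pi$ is a discrete fibration by computing $P$ as iterated pullbacks (stability under pullback and composition), then get the forward direction by closure under composition and the converse by left cancellation. The only difference is presentational: where the paper cites these closure and cancellation properties abstractly, as consequences of discrete fibrations being the right class of a factorisation system, you verify them concretely via the unique-lifting characterisation, which is a valid (and self-contained) way to establish the same steps.
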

  \begin{proof}
    Discrete fibrations are the right class of a (strong)
    factorisation system (see Lemma~\ref{lem:spans:facto} below),
    hence are stable under composition and pullback, and furthermore
    enjoy left cancellation: if $g \circ  f$ and $g$ are discrete
    fibrations, then so is $f$.

    Now, the limit $P$ may be computed by taking pullbacks of $p$ and
    $q$, respectively along $l$ and $r$, and then taking the pullback
    of the obtained cospan.  Thus, if $p$ and $q$ are discrete
    fibrations, then by stability under pullback and composition, so
    is the projection functor $P \to  T$.

    Thus, if the induced functor $S \to  P$ is a discrete
    fibration, then so is the middle functor $S \to  T$ by stability
    under composition.

    Conversely, if the middle functor $S \to  T$ is a discrete
    fibration, then so is the induced functor $S \to  P$ by left
    cancellation.
  \end{proof}

  \subsection{Simple games}
  We thus hope to recover standard simple games by slicing $\moo$
  under $OO$, and restricting the slice construction to discrete
  fibrations (for vertical morphisms and cells).  This may be carried
  over to the abstract setting using the observation, recalled in the
  above proof, that discrete fibrations are the right class of a
  factorisation system.
  \begin{defi}\label{def:double:facto}
    A \emph{double factorisation system} on a weak double category $\CCC$
    consists of factorisation systems $(\LLL_v ,\RRR _v )$ and $(\LLL_V,\RRR _V)$ on
    $\CCC_v $ and $\CCC_V$, respectively, such that
    \begin{enumerate}[label=(\emph{\Alph*})]
    \item \label{double:facto:L} $\LLL_V$ is preserved under
      horizontal composition and contains horizontal identities, and
    \item
      \label{double:facto:split} all cells $\alpha $ as below left with
      $\ell ,\ell ' \in  \LLL_v $ and $r,r' \in  \RRR _v $ factor as below right, with
      $\lambda  \in  \LLL_V$ and $\rho  \in  \RRR _V$.
      \begin{center}
        \diag{%
          A \& A' \\
          B \& B' \\
          C \& C' 
        }{%
          (m-1-1) edge[pro,twoa={S}] (m-1-2) %
          edge[labell={\ell }] (m-2-1) %
          (m-1-2) edge[labelr={\ell '}] (m-2-2) %
          (m-2-1) edge[labell={r}] (m-3-1) %
          (m-3-1) edge[pro,twob={U}] (m-3-2) %
          (m-2-2) edge[labelr={r'}] (m-3-2) %
          (a) edge[cell=0,labelr={\alpha }] (b)
        }
        \hfil $=$ \hfil
        \diag{%
          A \& A' \\
          B \& B' \\
          C \& C' 
        }{%
          (m-1-1) edge[pro,twoa={S}] (m-1-2) %
          edge[labell={\ell }] (m-2-1) %
          (m-2-1) edge[pro,twob={},labelaat={T}{.8},two={aa}{below}] (m-2-2) %
          (m-1-2) edge[labelr={\ell '}] (m-2-2) %
          (m-2-1) edge[labell={r}] (m-3-1) %
          (m-3-1) edge[pro,twobelow={bb}{U}] (m-3-2) %
          (m-2-2) edge[labelr={r'}] (m-3-2) %
          (a) edge[cell=0,labelr={\lambda }] (b)
          (aa) edge[cell=0,labelr={\rho }] (bb)          
        }

      \end{center}
    \end{enumerate}
  \end{defi}

  \begin{lem}\label{lem:spans:facto}
    Discrete fibrations and componentwise discrete fibrations are the
    right classes of factorisation systems which together form a
    double factorisation system for $\Span(\Cat)$.
  \end{lem}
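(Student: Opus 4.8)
The plan is to produce the two orthogonal factorisation systems separately — one on the vertical category $\Span(\Cat)_v = \Cat$ and one on the category $\Span(\Cat)_V$ of spans and cells — and then to check conditions~\ref{double:facto:L} and~\ref{double:facto:split} of Definition~\ref{def:double:facto}. For the vertical system I would simply invoke the comprehensive factorisation system~\cite{comprehensive}: every functor factors, essentially uniquely, as a final functor followed by a discrete fibration, and this makes $(\text{final},\text{discrete fibration})$ an orthogonal factorisation system on $\Cat$. This installs discrete fibrations as $\RRR_v$, with $\LLL_v$ the final functors.

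For the system on cells, the key observation is that $\Span(\Cat)_V$ is the functor category from the shape of a span, namely $\bullet \ot \bullet \to \bullet$, into $\Cat$, a span being a functor from that shape and a cell a natural transformation. Orthogonal factorisation systems lift pointwise along such functor categories: I would factor a cell componentwise by the comprehensive factorisation, and then use orthogonality to supply, for each arrow of the shape, the unique diagonal filler needed to assemble the three intermediate categories into a span and to upgrade the two families of component factors into genuine cells; naturality and functoriality follow from uniqueness of these fillers. This yields the orthogonal factorisation system whose right class $\RRR_V$ is the componentwise discrete fibrations and whose left class $\LLL_V$ is the componentwise final cells.

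It then remains to verify the two compatibility conditions. Condition~\ref{double:facto:split} is essentially the statement that the pointwise factorisation commutes with the two border projections $l,r \colon \Span(\Cat)_V \to \Cat$: given a cell together with prescribed $(\LLL_v,\RRR_v)$-factorisations of its two vertical borders, I would factor the cell by the pointwise system, observe that its borders thereby receive $(\LLL_v,\RRR_v)$-factorisations, and invoke essential uniqueness to match these with the prescribed ones, reading off the middle span together with the cells $\lambda \in \LLL_V$ and $\rho \in \RRR_V$. Condition~\ref{double:facto:L} has two parts: horizontal identities are componentwise identities, hence lie in $\LLL_V$ trivially, while closure under horizontal composition is the substantive point, and is where I expect the main obstacle to lie. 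Horizontal composition of cells is computed by pullback, so the composite cell is the induced functor between the two apexes $C \times_B D$ and $C' \times_{B'} D'$, and one must show this induced functor stays in the relevant class. On the discrete-fibration side this is routine: I would decompose the induced functor into pullbacks of the component functors and a pullback of the diagonal of the border functor, and conclude from stability under pullback and composition together with left cancellation — the last giving in particular that the diagonal of a discrete fibration is again one. The genuinely delicate point is the behaviour of the \emph{left} class under exactly these pullbacks, since final functors are \emph{not} stable under pullback in general; controlling $\LLL_V$ along the apex pullbacks arising in horizontal composition is therefore not a formal consequence of the closure properties above, and I expect this interaction between the factorisation and the defining pullbacks to be the crux of the lemma.
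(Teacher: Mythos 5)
Your construction of the two systems is the same as the paper's: the comprehensive factorisation system $(\mathrm{final},\mathrm{discrete\ fibration})$ on $\Span(\Cat)_v=\Cat$, and its pointwise lift to $\Span(\Cat)_V$, whose left class $\LLL_V$ is then necessarily the class of componentwise final cells. Your treatment of Condition~\ref{double:facto:split} by matching pointwise factorisations of the borders is fine (the paper regards all of this as immediate from the characterisation of discrete fibrations by unique lifting against $\one\into\two$). Moreover, your ``routine'' right-class computation is precisely what the paper's proof isolates as \emph{the only non-obvious point}: componentwise discrete fibrations are closed under horizontal composition because the apex map of a horizontal composite is a pullback in the arrow category, and right classes of factorisation systems are stable under such limits; your decomposition through the diagonal is an equivalent, more hands-on route to the same fact.

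Where you stop, however, your proof is incomplete, and your suspicion about the sticking point is correct in the strongest possible sense: $\LLL_V$ is \emph{not} closed under horizontal composition, so no argument can fill the gap you left. Concretely, let $\lambda$ be the special cell whose top span is $\one \ot \one \to \two$ with right leg picking $1$, whose bottom span is $\one \ot \two \to \two$ with right leg the identity, and whose middle component $\one\to\two$ picks $1$; all three components are final. Let $\kappa$ be the identity cell on the span $\two \ot \one \to \one$ whose left leg picks $0$; its components are identities. Both cells are componentwise final, but the middle component of $\kappa \bullet \lambda$ is the induced functor $\one \times_\two \one \to \two \times_\two \one$ on pullback apexes, i.e.\ $\emptyset \to \one$, which is not final. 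So Condition~\ref{double:facto:L}, read literally as a condition on $\LLL_V$, fails for this pair of systems, and the lemma as printed cannot be proved.

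The paper's own proof never addresses the left class at all: it checks only that componentwise discrete fibrations are stable under horizontal composition --- exactly your right-class argument --- thereby implicitly reading the closure requirement of Definition~\ref{def:double:facto} as a condition on $\RRR_V$. Under that reading, your proposal (minus the unresolved final worry) is complete and coincides with the paper's proof; under the literal reading, neither your proof nor the paper's establishes the statement, and yours has the merit of locating the problem. Note that this is not an innocuous typo: the proof of Theorem~\ref{thm:facto} invokes Condition~\ref{double:facto:L} for $\LLL_V$, precisely to place whiskerings of the special cells $\lambda_{\beta,\alpha}\in\LLL_V$ by identity cells inside $\LLL_V$, which is exactly the configuration refuted by the counterexample above. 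A smaller slip on your side points to the same soft spot: horizontal identity cells are not ``componentwise identities'' --- the horizontal identity on a vertical morphism $p$ has all three components equal to $p$, so it lies in $\LLL_V$ only when $p$ is final (and in $\RRR_V$ only when $p$ is a discrete fibration), so the ``contains horizontal identities'' clause also needs reinterpretation for the lemma to hold.
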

  \begin{proof}
    It is well-known that discrete fibrations may be defined by unique
    lifting w.r.t.\ the injection $\one \into  \two$ mapping $0$ to $1$.  The
    only non-obvious point is then that componentwise discrete
    fibrations are stable under horizontal composition, which follows
    from the general fact that the right class of any factorisation
    system is stable under pullback in the arrow category.
  \end{proof}

  This leads us to the following generalisation of
  Theorem~\ref{thm:basic}:
  \begin{thm}\label{thm:facto}
    Given any monad $M_V\colon  M_v  \proto M_v $ in a weak double category $\CCC$
    with double factorisation system $((\LLL_v ,\RRR _v ),(\LLL_V,\RRR _V))$, there is a
    \emph{slice} weak double category $\CCC/_\RRR M$ whose
    \begin{itemize}
    \item vertical category $(\CCC/_\RRR M)_v $ is $\CCC_v /_{\RRR _v } M_v $, the full
      subcategory of $\CCC_v /M_v $ on maps in $\RRR _v $;
    \item vertical category of cells is $(\CCC/_\RRR  M)_V = \CCC_V/_{\RRR _V} M_V$;
    \item horizontal composition of cells is given by factoring the
      pasting~\eqref{eq:compo:slice} as $\rho  \circ  \lambda $ and returning $\rho $;
    \item and whose horizontal identity on any $p\colon  A \to  M_v $ is given by
      factoring~\eqref{eq:identity:slice} as $\rho  \circ  \lambda $ and returning
      $\rho $.
    \end{itemize}
  \end{thm}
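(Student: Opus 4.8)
The plan is to realise $\CCC/_\RRR M$ as a structure inherited from the slice $\CCC/M$ of Theorem~\ref{thm:basic} through a pair of reflections induced by the factorisation systems, and then to use conditions~\ref{double:facto:L} and~\ref{double:facto:split} to verify that this inherited structure is again a weak double category. First I would recall the standard fact that, for any factorisation system $(\LLL,\RRR)$ on a category $\EEE$ and any object $M$, the full subcategory of $\EEE/M$ spanned by the maps in $\RRR$ is reflective: the reflection of an object $f\colon X\to M$ is its right factor $\rho$, with reflection unit the corresponding left factor $\lambda\in\LLL$. Applying this to $(\LLL_v,\RRR_v)$ at $M_v$ and to $(\LLL_V,\RRR_V)$ at $M_V$ yields reflective inclusions $(\CCC/_\RRR M)_v\hookrightarrow(\CCC/M)_v$ and $(\CCC/_\RRR M)_V\hookrightarrow(\CCC/M)_V$, with reflectors $R_v$ and $R_V$; these supply the vertical category and the category of cells. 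To see that the source and target functors $l,r$ of $\CCC/M$ restrict to these subcategories, I would invoke condition~\ref{double:facto:split} applied to the trivial border factorisations $p=p\circ\mathrm{id}$ (legitimate since identities lie in $\LLL_v\cap\RRR_v$): it shows that the $(\LLL_V,\RRR_V)$-factorisation of a cell whose vertical borders lie in $\RRR_v$ has its right factor $\rho$ again bordered by maps in $\RRR_v$, while the left factor $\lambda$ has identity borders. Hence $l,r$ descend, and $R_V$ commutes with them up to these identities.

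Next I would define the horizontal structure as the reflection of that of $\CCC/M$: horizontal composition of two horizontal morphisms (or cells) of $\CCC/_\RRR M$ is $R_V$ applied to their composite in $\CCC/M$. Unwinding the reflection, this is precisely ``paste as in~\eqref{eq:compo:slice} and return the right factor'', and likewise the horizontal identity on $p$ is $R_V$ applied to~\eqref{eq:identity:slice}. By the border analysis above, these outputs are correctly bordered over $(\CCC/_\RRR M)_v$, hence are genuine data of $\CCC/_\RRR M$.

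The crux is then to show that this reflected structure is coherently associative and unital. Here condition~\ref{double:facto:L} is decisive: since $\LLL_V$ is closed under horizontal composition and contains the horizontal identities, every comparison cell obtained by horizontally composing reflection units (the left factors $\lambda$) again lies in $\LLL_V$, and is therefore inverted by $R_V$. Consequently the associator and unitors of $\CCC/M$, which exist by Theorem~\ref{thm:basic}, descend to isomorphisms in $\CCC/_\RRR M$ after applying $R_V$; the coherence axioms (pentagon and triangle) transport because $R_V$ is a functor that preserves these reflected isomorphisms, and uniqueness of factorisations identifies the various iterated reflections. Interchange and functoriality of horizontal composition descend immediately, since vertical composition is unchanged and $R_V$ is functorial.

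I expect the main obstacle to be exactly this descent of horizontal composition: one must check that the two parenthesisations of a triple composite, each built by reflecting twice, are canonically isomorphic. The decisive point is that the mediating cells lie in $\LLL_V$ — which is precisely the force of condition~\ref{double:facto:L} — so that $R_V$ sends them to isomorphisms; assembling these into coherent associativity and unit data, while tracking throughout that condition~\ref{double:facto:split} keeps everything typed over $(\CCC/_\RRR M)_v$, is the technical heart of the argument.
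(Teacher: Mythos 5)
Your proposal is correct in substance but organises the proof differently from the paper, and the comparison is worth making. The paper's proof first invokes coherence for weak double categories~\cite{GrandisPare} (with a universe enlargement) to reduce to the case of strict $\CCC$, so that the two bracketings of the triple pasting with $\mu$ are equal on the nose; it then computes each bracketed composite by iterated $(\LLL_V,\RRR_V)$-factorisation, uses Condition~\ref{double:facto:L} to see that the accumulated left parts stay in $\LLL_V$, and obtains the associator as the canonical special isomorphism between two factorisations of one and the same cell, the pentagon holding by uniqueness of lifting. You instead package the same ingredients as a reflection: ``take the right factor'' is a reflector $R_V$ of $\CCC_V/M_V$ onto the full subcategory $\CCC_V/_{\RRR_V}M_V$, this reflector inverts exactly the maps over $M_V$ lying in $\LLL_V$, Condition~\ref{double:facto:L} makes it invert the comparison cells $1\bullet\lambda$ and $\lambda\bullet 1$, and so the weak structure of $\CCC/M$ provided by Theorem~\ref{thm:basic} descends---in effect a double-categorical analogue of Day's reflection theorem for monoidal categories. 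Your route avoids strictification and its universe hypothesis, works uniformly with the weak structure, and makes the kinship with Day-style reflection explicit (a pleasant echo of the paper's convolution application); the paper's route gets the associator and the pentagon canonically and almost for free from orthogonality, whereas in yours the descended associator is a composite of the image under $R_V$ of the associator of $\CCC/M$ with inverses of reflected units, so that the pentagon and triangle require the naturality chase you only gesture at (``because $R_V$ is a functor'')---admittedly the paper is comparably terse at its corresponding step. Two details to tighten: for $l$ and $r$ to restrict you also need that every cell in $\RRR_V$ over $M_V$ has vertical borders in $\RRR_v$, which follows from Condition~\ref{double:facto:split} together with uniqueness of factorisations (your trivial-factorisation argument proves the adjacent statement that the reflector preserves $\RRR_v$-borders); and your comparison cells lie in $\LLL_V$ not by Condition~\ref{double:facto:L} alone but because vertical identity cells are isomorphisms, hence in $\LLL_V$, after which Condition~\ref{double:facto:L} applies to the horizontal composites.
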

  \begin{rem}
    The fact that identities and horizontal composites have the right
    perimeter follows from Condition~\ref{double:facto:split} in
    Definition~\ref{def:double:facto}.
  \end{rem}

  \begin{rem}
    When the monad multiplication is in $\RRR _V$, and $\RRR _V$ is stable
    under horizontal composition and contains horizontal identities,
    then so are~\eqref{eq:compo:slice} and~\eqref{eq:identity:slice},
    hence $\CCC/_\RRR M$ is a sub weak double category of $\CCC/M$.
  \end{rem}
  In the general case, we may picture composition in $(\CCC/_\RRR M)_h$ as
  follows:
  \begin{center}
    \diag(.7,1.5){%
      A \& B \& C \\
      M _v  \& M _v  \& M _v \rlap{.}
    }{%
      (m-1-1) edge[pro,twoa={S}] (m-1-2) %
      edge[labell={p}] (m-2-1) %
      (m-2-1) edge[pro,twob={M_V}] (m-2-2) %
      edge[bend right,pro={.6},labelbat={M_V}{0.6}] node[coordinate,pos=.6] (bbb) {} (m-2-3) %
      (m-1-2) edge[labellat={q}{.3}] (m-2-2) %
      (m-1-2) edge[pro,twoabove={aa}{T}] (m-1-3) %
      (m-2-2) edge[shorten <={-1ex},pro,twobelow={bb}{M_V}] (m-2-3) %
      (m-1-3) edge[labelr={r}] (m-2-3) %
      (a) edge[cell=0] (b) %
      (aa) edge[cell=0] (bb) %
      (m-2-2) edge[celllr={0}{0.1},labelbl={\ \mu }] (bbb) %
      (m-1-1) edge[pro={.6},fore,bend right] node[coordinate,pos=.6] (mou) {} (m-1-3) %
      (m-1-2) edge[celllr={0}{.1},labelar={\lambda }] (mou) %
      (mou) edge[celllr={.3}{.3},labelr={\rho }] (bbb) %
    }
  \end{center}

  \begin{proof}[Proof of Theorem~\ref{thm:facto}]
    By coherence for weak double
    categories~\cite[Theorem~7.5]{GrandisPare}, and assuming a higher
    universe in which $\CCC$ is small, we may assume that $\CCC$ is in fact
    a (strict) double category.
    
    Composition of cells in $\CCC/_\RRR M$ is just as in $\CCC$, so the only
    non-trivial point to check is weak associativity and unitality of
    horizontal composition (of morphisms).  For weak associativity, we
    observe that both cells
    \begin{center}
      \diag(.6,1){%
      A \& B \& C \& D \\
      M_v  \& M_v  \& M_v  \& M_v 
    }{%
      (m-1-1) edge[pro,twoa={S}] (m-1-2) %
      edge[labell={p}] (m-2-1) %
      (m-2-1) edge[pro,twob={M_V}] (m-2-2) %
      edge[bend right,pro,labelb={M_V}] node[coordinate] (bbb) {}
      node[coordinate,pos=.9] (trouc) {} (m-2-3) %
      edge[bend right=40,pro,labelb={M_V}] node[coordinate,pos=.6] (bidoule) {} (m-2-4) %
      (m-1-2) edge[labelr={q}] (m-2-2) %
      (m-1-2) edge[pro,twoabove={aa}{T}] (m-1-3) %
      (m-2-2) edge[pro,twobelow={bb}{M_V}] (m-2-3) %
      (m-1-3) edge[labelr={r}] (m-2-3) %
      (a) edge[cell=0,labell={\alpha }] (b) %
      (aa) edge[cell=0,labell={\beta }] (bb) %
      (m-1-3) edge[pro,twoabove={aaa}{U}] (m-1-4) %
      (m-2-3) edge[pro,twobelow={bbbb}{M_V}] (m-2-4) %
      (m-1-4) edge[labelr={s}] (m-2-4) %
      (m-2-2) edge[labelr={\ \mu },celllr={0}{0.1}] (bbb) %
      (trouc) edge[labelr={\ \mu },celllr={0.2}{0.2}] (bidoule) %
      (aaa) edge[cell=0,labell={\gamma }] (bbbb) %
    }
    \hfil
    \diag(.6,1){%
      A \& B \& C \& D \\
      M_v  \& M_v  \& M_v  \& M_v 
    }{%
      (m-1-1) edge[pro,twoa={S}] (m-1-2) %
      edge[labell={p}] (m-2-1) %
      (m-2-1) edge[pro,twob={M_V}] (m-2-2) %
      edge[bend right=40,pro,labelb={M_V}] node[coordinate,pos=.4] (bidoule) {} (m-2-4) %
      (m-2-2) edge[bend right,pro,labelb={M_V}] node[coordinate] (bbb) {}
      node[coordinate,pos=.1] (trouc) {} (m-2-4) %
      (m-1-2) edge[labelr={q}] (m-2-2) %
      (m-1-2) edge[pro,twoabove={aa}{T}] (m-1-3) %
      (m-2-2) edge[pro,twobelow={bb}{M_V}] (m-2-3) %
      (m-1-3) edge[labelr={r}] (m-2-3) %
      (a) edge[cell=0,labell={\alpha }] (b) %
      (aa) edge[cell=0,labell={\beta }] (bb) %
      (m-1-3) edge[pro,twoabove={aaa}{U}] (m-1-4) %
      (m-2-3) edge[pro,twobelow={bbbb}{M_V}] (m-2-4) %
      (m-1-4) edge[labelr={s}] (m-2-4) %
      (m-2-3) edge[labelr={\ \mu },celllr={0}{0.1}] (bbb) %
      (trouc) edge[labell={\mu \ },celllr={0.2}{0.2}] (bidoule) %
      (aaa) edge[cell=0,labell={\gamma }] (bbbb) %
    }
    \end{center}
    are equal.  Now, denoting composition in $(\CCC/_\RRR M)_h$ by
    $\tilde{\bullet}$, $\gamma  \tilde{\bullet} (\beta  \tilde{\bullet} \alpha )$ and
    $(\gamma  \tilde{\bullet} \beta ) \tilde{\bullet} \alpha $ are obtained by factoring them as
    follows.  For the former, we factor
    \begin{center}
    $T \bullet S \xto{\beta  \bullet \alpha } M_V \bullet M_V \xto{\mu } M_V$
    \hfil as \hfil
    $T \bullet S \xto{\lambda _{\beta ,\alpha }} K_{\beta ,\alpha } \xto{\rho _{\beta ,\alpha }} M_V,$
  \end{center}
  in which, by Condition~\ref{double:facto:split}, $\lambda _{\beta ,\alpha }$ has
  identity left and right borders, i.e., is \emph{special}.  We then
  factor
    \begin{center}
    $U \bullet K_{\beta ,\alpha } \xto{\gamma  \bullet \rho _{\beta ,\alpha }} M_V \bullet M_V \xto{\mu } M_V$
    \hfil as \hfil
    $U \bullet K_{\beta ,\alpha } \xto{\lambda _{\gamma ,(\beta ,\alpha )}} K_{\gamma ,(\beta ,\alpha )} \xto{\rho _{\gamma ,(\beta ,\alpha )}} M_V.$
  \end{center}
  The other composite may be computed symmetrically, so that
  we obtain factorisations:
  \begin{center}
    \diag{%
      A \& B \& C \& D \\
      M_v  \&  \&  \& M_v 
    }{%
      (m-1-1) edge[pro,twoa={S}] (m-1-2) %
      edge[labell={p}] (m-2-1) %
      edge[bend right=40,pro,labelb={}] node[coordinate,pos=.8] (trouc) {} (m-1-3) %
      edge[bend right=40,pro,labelbat={K_{\gamma ,(\beta ,\alpha )}}{0.4}]
      node[coordinate,pos=.6] (bidoule) {} (m-1-4) %
      (m-2-1)
      edge[bend right=40,pro,labelb={M_V}] node[coordinate,pos=.6] (machin) {} (m-2-4) %
      (m-1-2) edge[pro,twoabove={aa}{T}] (m-1-3) %
      (m-1-3) edge[pro,twoabove={aaa}{U}] (m-1-4) %
      (m-1-4) edge[labelr={s}] (m-2-4) %
      (m-1-2) edge[labell={\lambda _{\beta ,\alpha }\ },celllr={0}{0.1}] (trouc) %
      (trouc) edge[labelar={\lambda _{\gamma ,(\beta ,\alpha )}},celllr={0.2}{0.2}] (bidoule) %
      (bidoule) edge[labelr={\ \rho _{\gamma ,(\beta ,\alpha )}},celllr={0.2}{0.2}] (machin) %
    }
    \hfil
    \diag{%
      A \& B \& C \& D \\
      M_v  \&  \&  \& M_v \rlap{.}
    }{%
      (m-1-1) edge[pro,twoa={S}] (m-1-2) %
      edge[labell={p}] (m-2-1) %
      (m-1-2) edge[bend right=40,pro,labelb={}] node[coordinate,pos=.2] (trouc) {} (m-1-4) %
      (m-1-1)
      edge[bend right=40,pro,labelbrat={K_{(\gamma ,\beta ),\alpha }}{0.7}]
      node[coordinate,pos=.4] (bidoule) {} (m-1-4) %
      (m-2-1)
      edge[bend right=40,pro,labelb={M_V}] node[coordinate,pos=.4] (machin) {} (m-2-4) %
      (m-1-2) edge[pro,twoabove={aa}{T}] (m-1-3) %
      (m-1-3) edge[pro,twoabove={aaa}{U}] (m-1-4) %
      (m-1-4) edge[labelr={s}] (m-2-4) %
      (m-1-3) edge[labelr={\ \ \ \lambda _{\beta ,\gamma }},celllr={0}{0.1}] (trouc) %
      (trouc) edge[labelal={\lambda _{(\gamma ,\beta ),\alpha }},celllr={0.2}{0.2}] (bidoule) %
      (bidoule) edge[labelr={\ \rho _{(\gamma ,\beta ),\alpha }},celllr={0.2}{0.2}] (machin) %
    }
  \end{center}
  By Condition~\ref{double:facto:L}, both are in fact factorisations
  for $(\LLL_V,\RRR _V)$, so that by lifting, we obtain a special cell
  $a_{\alpha ,\beta ,\gamma }\colon  K_{\gamma ,(\beta ,\alpha )} \isoto K_{(\gamma ,\beta ),\alpha }$ such that
  $\rho _{(\gamma ,\beta ),\alpha } \circ  a_{\alpha ,\beta ,\gamma } = \rho _{\gamma ,(\beta ,\alpha )}$, which is our candidate
  associator for $\CCC/_\RRR M$. It satisfies the MacLane pentagon by
  uniqueness of lifting.

  Weak unitality follows similarly.
\end{proof}

We finally obtain:
\begin{cor}
  Consider the weak double category $\Span(\Cat)/_{\dfib}\H$.
  Restricting horizontal morphisms to discrete fibrations
  $S \to  \P _{A,B}$ that are subcategory inclusions, we obtain a category
  which is isomorphic to the standard category of simple games.
\end{cor}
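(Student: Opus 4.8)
The plan is to construct an isomorphism of categories between the restricted category --- write $\mathcal{D}$ for it --- and the standard category $\mathcal{G}$ of simple games, by matching objects, morphisms, identities and composition separately, in each case unfolding the abstract data of $\Span(\Cat)/_{\dfib}\H$ into the combinatorics of plays through the concrete description of $\H$ from Section~\ref{sec:mellies}. On objects, a game of $\mathcal{D}$ is by definition a vertical morphism in the right class, i.e.\ a discrete fibration $p\colon A\to\H_v$, and these were observed above to be equivalent --- through the category-of-elements construction --- to presheaves on $\omega$, which is exactly how standard simple games (forests) are defined. This is the bijection on objects, up to the identification of forests with $\psh{\omega}$ used throughout the paper.

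On morphisms, a map $A\to B$ of $\mathcal{D}$ is a cell whose leg $S\to\P_{A,B}$ is a discrete fibration and a subcategory inclusion. First I would make $\P_{A,B}$ explicit: unfolding Definition~\ref{def:span:cat}, its objects are triples $(a,w,b)$ consisting of a play $a$ of $A$, a play $b$ of $B$, and a scheduling cell $w$ of $\H$ with matching borders, that is, exactly the plays of the arrow game $A\to B$, with morphisms given by prefix inclusions. By the lemma characterising when $S\to\P_{A,B}$ is a discrete fibration, the cell lies in $\RRR_V$ iff this leg is a discrete fibration; adding the inclusion hypothesis, and recalling that a discrete fibration which is a subcategory inclusion is precisely a sieve, a morphism of $\mathcal{D}$ is exactly a set of plays of $A\to B$ closed under prefixes, i.e.\ a standard strategy. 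This gives the bijection on morphisms.

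The heart of the proof is compatibility with composition. I would first recognise the two concrete operations packaged in the abstract composite~\eqref{eq:compo:slice}: the horizontal pasting of spans before $\mu$ is a pullback, whose objects are the matching pairs $(s,t)$ --- the interaction sequences over $A,B,C$ --- while the monad multiplication $\mu$ of $\H$ is, by its construction in Section~\ref{sec:mellies}, exactly the hiding of the $B$-moves. Thus the composite cell before factorisation is the category of interactions, mapping to $\P_{A,C}$ by ``hide $B$''. It then remains to check that the right factor $\rho$ of its comprehensive factorisation is the standard composite, namely the prefix-closure of the set of hidden interactions; since the discrete-fibration part of a comprehensive factorisation has, over a play $z$, the fibre $\pi_0$ of the comma category of interactions above $z$, this amounts to showing that these comma categories are connected, so that $\rho$ is again an inclusion and selects precisely the hidden-interaction subforest. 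Finally, the horizontal identity~\eqref{eq:identity:slice}, unfolded and factored, is the copycat strategy, which is again a prefix-closed set of plays, matching the identity of $\mathcal{G}$.

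I expect the composition step to be the main obstacle, on two counts that are really one: showing that $\mathcal{D}$ is closed under the abstract composition --- i.e.\ that the comprehensive factorisation of the hidden interaction stays among subcategory inclusions --- and that it reproduces the set-theoretic hide-then-prefix-close composite of $\mathcal{G}$. Both hinge on the connectedness of the comma categories of interaction-witnesses over a given visible play, which is where the sequentiality enforced by the scheduling of $\moo$, and the interplay between hiding ($\mu$) and the comprehensive factorisation system of Lemma~\ref{lem:spans:facto}, must be controlled. Once $\P_{A,B}$ and $\mu$ have been identified concretely, the remaining verifications --- functoriality of the two bijections and the identity law --- are routine.
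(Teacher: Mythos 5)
Your unfolding of the abstract construction is the right frame, and in fact the paper offers no proof of this corollary at all, so the dictionary you describe really is what has to be carried out. Your first steps are correct: objects of $\Span(\Cat)/_{\dfib}\H$ are discrete fibrations over $\H_v \iso \omega$, i.e.\ presheaves on $\omega$; a discrete fibration $S \to \P_{A,B}$ that is a subcategory inclusion is exactly a sieve, i.e.\ a prefix-closed set of plays; the pasting~\eqref{eq:compo:slice} is ``matching pairs, then hide''; and by Lemma~\ref{lem:comprehensive} the right factor $\rho$ has, over a play $z$ of $\P_{A,C}$, the fibre $\pi_0$ of the category of interaction witnesses of $z$. (The identity case is also unproblematic: the copycat witnesses of a given play form a connected category, so the identity is the copycat sieve.) The genuine gap is exactly at the step you defer: connectedness of the witness categories is not a verification you can postpone --- it is \emph{false} at the generality at which you (and the literal wording of the corollary) are working, namely arbitrary inclusions.

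Here is a counterexample. Let $A$ be the game with only the empty play, let $C$ have one O-move $c$ followed by one P-move $c'$, and let $B$ have two initial O-moves $b_1,b_2$, each followed by one P-move $b_1',b_2'$. Take $\sigma\colon A\to B$ and $\tau\colon B\to C$ to be the maximal sieves (all plays); these are inclusions and discrete fibrations, hence morphisms of your $\mathcal{D}$ --- but note $\tau$ is not deterministic: after $c$, Proponent may play $c'$, $b_1$ or $b_2$. The play $cc'$ of $A \to C$ then has exactly three interaction witnesses: the interaction with no $B$-moves, and the two interactions passing through $b_1$, respectively $b_2$. These are pairwise incomparable, admit no common upper bound, and their common lower bounds hide to $\epsilon$ or $c$, hence lie outside the comma category over $cc'$. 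That comma category is therefore discrete with three objects, so $\pi_0 = 3$, and the composite produced by Theorem~\ref{thm:facto} has a three-element fibre over $cc'$: it is not an inclusion, and it differs from the hide-then-prefix-close composite (which here is the maximal sieve). So inclusions are \emph{not} closed under the composition of $\Span(\Cat)/_{\dfib}\H$, and the isomorphism you aim for cannot exist at this generality. What makes the intended statement true is determinism, which is built into standard simple-game strategies: for deterministic $\sigma,\tau$ the witnesses of a given visible play form a chain, the comma categories are connected, and your argument then goes through. Your plan thus needs two corrections: the morphism dictionary must match standard (deterministic) strategies with the corresponding subclass of sieves rather than with all inclusions, and connectedness must be derived from determinism --- the sequential scheduling of $\moo$ alone does not give it, as the example shows. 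This multiplicity phenomenon (composition of prefix-closed presheaf strategies counts interaction witnesses) is precisely why determinism or sheaf-type conditions appear in the presheaf-strategy literature the paper cites.
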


\subsection{Day convolution}
We finally reach the surprising application mentioned in the
introduction, Day convolution.  The purpose of this operation is to
show that the Yoneda embedding $\yoneda \colon  \C  \to  \psh{\C }$ is monoidal when $\C $
is.  This means that $\psh{\C }$ may be equipped with monoidal structure
preserved by $\yoneda $ up to coherent isomorphism.  The tensor is given as
follows:
\begin{defi}
  For any small monoidal category $\C $ and $X,Y \in  \psh{\C }$, let
  $$(X \otimes  Y)(c) = \int ^{(c_1 ,c_2 ) \in  \C ^2 } X(c_1 ) \times  Y(c_2 ) \times  \C (c,c_1 \otimes c_2 ).$$
\end{defi}

Let us now recover this structure from Theorem~\ref{thm:facto}, in the
particular case where $\C $ is strictly monoidal.  The starting point is
the sub weak double category, say $\W $, of $\Span(\Cat)$, obtained by
restricting attention to just one object and one vertical morphism,
namely the terminal category $\one$ and the identity thereon. Thus, $\W _V$
consists of categories and functors, and horizontal composition is
given by cartesian product. Furthermore, a monad in $\W $ is nothing but
a monoid in $\Cat$ for the cartesian product, i.e., a strict monoidal
category $\C $.  The double factorisation system induced by discrete
fibrations on $\Span(\Cat)$ restricts to one on $\W $, and obviously the
weak double category $\W /_{\dfib}\C $ is vertically trivial, hence
underlies a monoidal category, say $\C '$.

\begin{thm}\label{thm:convolution}
  For any strictly monoidal category $\C $, the monoidal category $\C '$
  is equivalent to $\psh{\C }$ equipped with the convolution tensor
  product.
\end{thm}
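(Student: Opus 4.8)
The plan is to make the monoidal category $\C'$ fully explicit and then match it with $\psh{\C}$ under convolution. First I determine the underlying category. Since $\W/_{\dfib}\C$ is vertically trivial, the objects of $\C'$ are its horizontal morphisms, namely the discrete fibrations $S \to \C$, and its morphisms are the cells, namely morphisms of discrete fibrations over $\C$. As discrete fibrations over $\C$ are, via the Grothendieck construction, equivalent to presheaves, this already yields an equivalence $\C' \simeq \psh{\C}$ of underlying categories, under which a fibration $S \to \C$ corresponds to the presheaf $X$ whose value $X(c)$ is the fibre over $c$. It then remains to show that this equivalence carries the tensor of $\C'$ to convolution, the unit to $\yoneda(I)$, and the coherence data to that of convolution.

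Next I compute the tensor. By Theorem~\ref{thm:facto}, $S \otimes T$ is the right (discrete-fibration) part of the factorisation of the pasting~\eqref{eq:compo:slice}. In $\W$ horizontal composition is cartesian product and the monad multiplication $\mu$ is the tensor $\otimes\colon \C \times \C \to \C$ (this is where strictness of $\C$ is used, so that $\mu$ is a genuine monad multiplication), so the pasting is simply the functor $H\colon S \times T \to \C$, $(s,t) \mapsto f(s) \otimes g(t)$, where $f\colon S \to \C$ and $g\colon T \to \C$ are the two given fibrations. Thus $S \otimes T$ is the discrete-fibration part of the comprehensive factorisation~\cite{comprehensive} of $H$, and I am reduced to computing that factorisation concretely.

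Here I use the standard description of the comprehensive factorisation: the discrete-fibration part of any $H\colon E \to \C$ is classified by the presheaf $c \mapsto \pi_0(c \downarrow H)$, the set of connected components of the comma category of pairs $(e \in E,\ \phi\colon c \to He)$. Writing $X, Y$ for the presheaves classifying $S, T$, an object of $(c \downarrow H)$ is a tuple $(c_1, c_2, x \in X(c_1), y \in Y(c_2), \phi\colon c \to c_1 \otimes c_2)$, and a morphism acts by restriction on $x$ and $y$ and by postcomposition with $h_1 \otimes h_2$ on $\phi$; taking connected components is precisely forming the coend, so
\[
  \pi_0(c \downarrow H) \;\iso\; \int^{(c_1,c_2) \in \C^2} X(c_1) \times Y(c_2) \times \C(c, c_1 \otimes c_2) \;=\; (X \otimes Y)(c),
\]
naturally in $c$ and in $X, Y$. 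Running the same computation on the horizontal unit~\eqref{eq:identity:slice}, whose pasting is the functor $\one \to \C$ selecting the monoidal unit $I$, yields the representable $\yoneda(I) = \C(-, I)$, which (again using strictness, via co-Yoneda) is exactly the convolution unit.

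The hard part is coherence: one must check that the equivalence is strong monoidal, i.e.\ that the associator and unitors of $\C'$ — built in the proof of Theorem~\ref{thm:facto} by uniqueness of lifting — are transported to those of convolution. I expect this to follow formally rather than by fresh calculation, since both families of constraints are the unique comparison isomorphisms between two factorisations of the same iterated pasting (equivalently, between two presentations of one iterated coend), and are therefore forced to agree under the natural isomorphisms just exhibited. An alternative that sidesteps this bookkeeping is to invoke the universal property of convolution — that $(\psh{\C}, \otimes)$ is the free monoidal cocompletion of $\C$ — and to verify the same property for $\C'$; but the direct route via coends stays closer to the explicit nature of the construction.
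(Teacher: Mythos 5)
Your proposal is correct and takes essentially the same route as the paper: identify the underlying category of $\C'$ with $\psh{\C}$ via the discrete-fibration/presheaf equivalence, observe that the tensor in $\C'$ is the discrete-fibration part of the comprehensive factorisation of $\otimes \circ (S \times T)$, and compute that part as the convolution coend --- your $\pi_0(c \downarrow H)$ description is exactly the content of the paper's Lemma~\ref{lem:comprehensive}, since the coend $\int^{a} \C(c, H(a))$ is the set of connected components of the comma category $c \downarrow H$. If anything, you are more thorough than the paper's own proof, which computes only the tensor formula and leaves the unit, naturality, and the transport of the coherence constraints implicit.
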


In order to prove this, let us first show:
\begin{lem}\label{lem:comprehensive}
  Let $f\colon  A \to  B$ be a functor. The discrete fibration $\rho _f$ associated
  to $f$ is determined up to isomorphism by
  $$\partial ^\star (\rho _f)(b) \iso  \int ^{a \in  A} B(b,f(a)),$$
  where $\partial ^\star \colon  \dfib_B \to  \psh{B}$ is the standard equivalence between
  discrete fibrations and presheaves.
\end{lem}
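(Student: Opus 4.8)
The plan is to recognise the discrete fibration $\rho_f$ as a reflection and then transport that reflection along $\partial^\star$. Recall that the comprehensive factorisation writes $f$ as $A \xto{\lambda_f} E_f \xto{\rho_f} B$, with $\lambda_f$ in the left class and $\rho_f$ a discrete fibration. Read in $\Cat/B$, the map $\lambda_f$ is a morphism $f \to \rho_f$, and I claim it exhibits $\rho_f$ as the reflection of $f$ into the full subcategory $\dfib_B \hookrightarrow \Cat/B$ of discrete fibrations over $B$. Indeed, given any discrete fibration $p\colon D \to B$ and any morphism $f \to p$ in $\Cat/B$, orthogonality of the factorisation system — $\lambda_f$ on the left, $p$ on the right — yields a unique diagonal filler $E_f \to D$ over $B$, which is precisely the universal property of the reflection. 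In particular $\dfib_B$ is reflective in $\Cat/B$, with reflection $L$ given by $f \mapsto \rho_f$.

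It follows that the composite $\partial^\star \circ L\colon \Cat/B \to \psh{B}$ of $L$ with the equivalence $\partial^\star$ is left adjoint to $R := (\partial^\star)^{-1}$ followed by the inclusion, i.e. the functor carrying a presheaf $P$ to the projection to $B$ of its category of elements; and $\partial^\star(\rho_f) = (\partial^\star \circ L)(f)$. Hence it suffices to exhibit the coend assignment $f \mapsto \int^{a \in A} B(-, fa)$ as another left adjoint to $R$, since the statement then follows by uniqueness of adjoints, naturally in $b$.

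For this last adjunction I would run the standard Yoneda computation. On one side, a morphism $f \to R(P)$ in $\Cat/B$ is a functor from $A$ to the category of elements of $P$ lying over $B$, which unwinds to a family $(\xi_a \in P(fa))_{a \in A}$ compatible with the action of the morphisms of $A$, i.e. an element of $\lim_{a \in A} P(fa)$. On the other side, since the coend $\int^{a} B(-, fa)$ is a colimit in $a$ and $\psh{B}(-, P)$ turns it into the corresponding limit, the Yoneda lemma gives $\psh{B}\bigl(\int^{a} B(-, fa),\, P\bigr) \iso \lim_{a \in A} \psh{B}(B(-, fa), P) \iso \lim_{a \in A} P(fa)$. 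The two descriptions agree naturally in $P$, which establishes the adjunction and hence $\partial^\star(\rho_f)(b) \iso \int^{a \in A} B(b, fa)$.

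The step I expect to require the most care is the variance bookkeeping — checking that the right class of the factorisation system matches discrete fibrations (rather than opfibrations) and that $\partial^\star$ lands in presheaves with the orientation making $b \mapsto \int^{a} B(b, fa)$ genuinely contravariant — together with confirming that the reflection supplied by the factorisation system coincides, as a left adjoint, with the coend construction. Everything else is a routine manipulation of (co)ends and hom-sets.
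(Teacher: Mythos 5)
Your proof is correct, but it follows a genuinely different route from the one in the paper. The paper's proof is essentially a citation: it observes that, in the construction of the comprehensive factorisation in \cite{comprehensive} (dualised from initial functors and discrete opfibrations to final functors and discrete fibrations), the presheaf $\partial^\star(\rho_f)$ is \emph{by construction} the left Kan extension of the terminal presheaf $\op{A} \to \Set$ along $\op{f}$, and the pointwise coend formula for left Kan extensions then yields $\partial^\star(\rho_f)(b) \iso \int^{a \in A} B(b,f(a))$ at once. You instead re-derive the universal property from scratch: orthogonality shows that $\lambda_f \colon f \to \rho_f$ exhibits $\rho_f$ as the reflection of $f$ into the full subcategory $\dfib_B$ of $\Cat/B$; transporting along the equivalence $\partial^\star$ makes $f \mapsto \partial^\star(\rho_f)$ left adjoint to the category-of-elements functor $\psh{B} \to \Cat/B$; and your Yoneda computation exhibits $\int^{a \in A} B(-,f(a))$ as another representing object of $P \mapsto \Cat/B(f, el(P))$, so the two agree. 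These are two faces of the same fact --- the left Kan extension invoked by the paper computes precisely the reflection you construct --- but your version is self-contained: it needs only the existence of the comprehensive factorisation system (which the paper already uses in Lemma~\ref{lem:spans:facto}) together with standard (co)end calculus, and never unfolds the explicit construction of the factorisation, at the price of being longer than the paper's two-line appeal to \cite{comprehensive}. Your closing worry about variance is the right thing to check, but it is harmless here: on both sides of your adjunction computation the compatibility condition $P(f\alpha)(\xi_{a'}) = \xi_a$ identifies the hom-set with the same limit of $P \circ \op{f}$ over $\op{A}$, so the loose indexing $\lim_{a \in A}$ is used consistently.
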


\begin{proof}
  This is actually obvious by construction. In~\cite{comprehensive},
  the dual case is actually treated, initial functors and discrete
  opfibrations.  But up to this discrepancy, $\partial ^\star (f)$ is precisely $k$
  in the proof of~\cite[Theorem 3]{comprehensive}, which would in our
  case be defined as the left Kan extension of
  $\op{A} \xto{!} 1 \xto{\name{1}} \Set $ along $\op{f}$.
  By the well-known characterisation of left Kan extensions by coends,
  we readily obtain the desired formula.
\end{proof}

\begin{proof}[Proof of Theorem~\ref{thm:convolution}]
  By construction, given two presheaves $X,Y \in  \psh{\C }$ and
  transporting them to their corresponding discrete fibrations, say
  $S\colon  el(X) \to  \C $ and $T\colon  el(Y) \to  \C $, their tensor product $S \tilde{\bullet} T$ in $\C '$ is
  the right factor of the composite
  $$el(X) \times  el(Y) \xto{S \times  T} \C  \times  \C  \xto{\otimes } \C .$$
  By Lemma~\ref{lem:comprehensive}, the result has its corresponding
  presheaf defined up to isomorphism by
  \begin{center}
  \hfill $\begin{array}[b]{rcl}
      \partial ^\star (S \tilde{\bullet} T)(c) & \iso  & \int ^{(a,b) \in  el(X) \times  el(Y)} \C (c, \otimes  ((S\times T)(a,b))) \\
                            & = & \int ^{(a,b) \in  el(X) \times  el(Y)} \C (c, S(a) \otimes  T(b)) \\
                  & \iso  & \int ^{((c_1 ,x),(c_2 ,y)) \in  el(X) \times  el(Y)} \C (c, c_1 \otimes c_2 ) \\
                  & \iso  & \int ^{c_1 ,c_2 } X(c_1 ) \times  Y(c_2 ) \times  \C (c, c_1 \otimes c_2 ), \mbox{as desired.} 
    \end{array}$ \hfill \qedhere
  \end{center}
  \end{proof}

  \section{Conclusion and perspectives}\label{sec:conc}
  We have designed an abstract slice construction over monads in weak
  double categories, which has as instances
  \begin{itemize}
  \item a weak double category of simple games and concurrent
    strategies,
  \item and the monoidal category of presheaves over any strict
    monoidal category.
  \end{itemize}
  We see at least two directions for future work. First, we should try
  to accomodate not only the weak double category structure of
  Melliès's construction, but also symmetric monoidal
  closedness. Melliès is also currently working on the construction of
  a linear exponential comonad~\cite{DBLPconf/csl/BentonBPH92} on his
  category of simple games and concurrent strategies. This will of
  course be a useful feature to incorporate to our framework.  The
  second direction for future work is to generalise our contsruction
  to encompass Day convolution for non-strict monoidal
  categories. This will involve a 3-dimensional refinement of weak
  double categories.

\bibliographystyle{plainnat}
  
  \bibliography{../bib}
\end{document}